\documentclass[english,12pt]{article}
\usepackage[T1]{fontenc}
\usepackage{geometry}
\usepackage{array,arydshln}
\usepackage{comment}
\usepackage{xcolor,soul}

\usepackage{graphicx, graphics}
\usepackage{enumitem}
\usepackage{amsmath,amsthm,amssymb,setspace,bm}

\usepackage[T1]{fontenc}
\usepackage[utf8]{inputenc}

\usepackage[colorlinks,citecolor=blue,urlcolor=blue,allcolors=blue,bookmarks=false,hypertexnames=true]{hyperref} 
\usepackage[square,authoryear]{natbib}
\usepackage{tikz}

\def\PC{\mbox{P\&C} }

\newtheorem{prop}{Proposition}

\newtheorem{thm}{Theorem}
\theoremstyle{definition}

\newtheorem{definition}{Definition}

\newtheorem{claim}{Claim}

\title{{Sequential Pricing Mechanisms for Surplus Division}\footnote{We are grateful to Pierre Fleckinger, Jeanne Hagenbach, Frédéric Koessler, Raphaël Lévy, and Nicolas Schutz, as well as to seminar audiences at Collegio Carlo Alberto, CREST, HEC Paris, Osaka University, Tokyo University of Science, Waseda University, and the 8th Spain-Japan Meeting on Economic Theory, for helpful comments. 
This work has benefited from a State grant managed by the Agence Nationale de la Recherche under the Investissements d'Avenir programme with the reference ANR-18-EURE-0005 / EUR DATA EFM and ANR-11-IDEX-0003/Labex Ecodec/ANR-11-LABX-0047.}}
\author{
	Yukihiko Funaki\thanks{
	{School of Political Science and Economics, Waseda University, Tokyo, Japan. E-mail: \texttt{funaki@waseda.jp}}.
    ORCID: 0000-0001-5202-7182.}
		\and
	Yukio Koriyama\thanks{ \raggedright
	Centre de Recherche en Économie et Statistiques (CREST), Institut Polytechnique de Paris, Palaiseau, France. 
	E-mail: \texttt{yukio.koriyama@polytechnique.edu}.
    ORCID: 0000-0002-3889-8317.}
		\and
	Matías Núñez\thanks{
	{Centre de Recherche en Économie et  Statistiques (CREST), CNRS, Institut Polytechnique de Paris, Palaiseau, France. E-mail: \texttt{matias.nunez@ensae.fr}}.
    ORCID: 0000-0003-0948-7647.}
		\and
	Giacomo Rostagno\thanks{{HEC Paris, Jouy-en-Josas, France, and RBB Economics, London, UK. E-mail: \texttt{giacomo.rostagno@rbbecon.com}}. The views expressed are those of the author and do not reflect those of RBB Economics or its clients.}
}  

\begin{document}

\maketitle

\begin{abstract}
Extending the Price-and-Choose (P\&C) mechanism of \citet{Echenique2025}, we propose the Price-Accept-and-Choose (PA\&C) mechanism, which preserves efficiency while eliminating P\&C's first-mover advantage.
We then analyze randomized and bidding variants and show that the resulting equilibrium payoffs correspond to standard solutions in transferable utility games: the Center of the Imputation Set value for P\&C and the Shapley value for PA\&C.
In the randomized variants, these solutions arise in expectation; in the bidding variants, they are implemented on every equilibrium path.
We further relate other efficient designs, such as balanced VCG payments, to additional solution concepts, forging an interpretable bridge between implementation theory and cooperative game theory.

\noindent\textbf{JEL classification:} C72, C71, D44, D71  \\
\noindent\textbf{Keywords:} Social choice; Full implementation; {TU games} ; Shapley Value
\end{abstract}

\section{Introduction}

When a group chooses among several alternatives, selecting the efficient one is arguably the best choice when monetary transfers are available: picking an efficient alternative ensures that total welfare is maximized while the transfers make it feasible to compensate those who lose from the choice. This is the logic behind the Welfare theorems, the Vickrey-Clarke-Groves mechanism, and much of mechanism design and implementation theory.\footnote{\cite{Moore1988} roughly state that for \emph{any} social choice
rule, one can design a mechanism that yields unique implementation in subgame-perfect equilibria with transfers.} But efficiency alone does not pin down how the surplus is to be divided as two mechanisms achieving efficiency may distribute the gains very differently. We study how specific mechanisms shape surplus division in social choice problems with transfers and complete information. Our main finding is that simple sequential pricing mechanisms, in which players propose prices over alternatives and others choose or reject, generate equilibrium payoffs that correspond to {canonical solution concepts in transferable utility (TU) games}. The Price-and-Choose mechanism of \cite{Echenique2025} yields the Center of the Imputation Set value; our proposed extension, in which players may reject proposed prices, yields the Shapley value. These correspondences reflect how each mechanism allocates bargaining power across positions in the sequence of play.

To fix ideas, consider two sisters, Sophia and Victoria, who have inherited a firm and must decide how to restructure it. Several alternatives are available, such as selling, merging with a partner, or keeping and reorganizing, and each sister values these differently. They know each other's preferences, but an arbitrator who must approve the decision does not. Under P\&C, Sophia proposes a price for each alternative (the prices adding up to zero), and Victoria selects whichever alternative she prefers and pays Sophia the associated price. \cite{Echenique2025} show that every subgame-perfect equilibrium is efficient: Sophia sets prices that make Victoria indifferent, so Victoria is willing to pick whichever alternative Sophia prefers, which coincides with the surplus-maximizing one. But the surplus division is quite unbalanced as Victoria receives only her average utility across alternatives, while Sophia captures the rest. 

Now suppose Victoria can reject the proposed prices and simply choose her favorite alternative with no transfer. This is PA\&C. The rejection option raises Victoria's guaranteed payoff from her average utility to her maximum utility,  altering the surplus division. Anticipating that Victoria can reject the prices, Sophia now offers prices generous enough that Victoria prefers to accept and choose the efficient alternative rather than reject and choose her own favorite. In equilibrium, each sister receives exactly her marginal contribution to the joint surplus.

The logic of the Sophia and Victoria's example extends to any number of players and alternatives. In the general PA\&C mechanism, players move sequentially: each proposes a price vector to the next player, who may accept or reject all previously proposed prices before submitting her own proposal. The last player accepts or rejects and then selects an alternative. If any player rejects, all previously proposed prices are erased and the continuation game is itself a PA\&C mechanism among the remaining players. This means that each player can guarantee herself at least what she would earn as the first mover in the smaller game — her marginal contribution to the coalition formed by herself and all subsequent players. We show that in every subgame-perfect equilibrium, each player receives exactly her marginal contribution to the coalition of players who follow her, and the selected alternative is efficient. This full-implementation result, according to which every equilibrium is efficient, and every efficient alternative is supported by some equilibrium, holds for any finite number of players and alternatives.

While these mechanisms ensure efficiency, the division of surplus depends on the sequence of play.  To address this, we first consider randomized versions in which the order of moves is chosen uniformly at random. We show that under randomization the expected payoffs correspond to well-known solution concepts of the associated cooperative game: the Center of the Imputation Set (CIS, \citet{DriessenFunaki1991}) for the P\&C, and the Shapley value (\citet{Shapley1953}) for the PA\&C.  The CIS assigns each player her standalone payoff plus an equal share of the remaining surplus, while the Shapley value averages each player’s marginal contribution over all possible positions in the sequence.

Randomization yields a clear ex‑ante interpretation of surplus division, but ex‑post payoffs still depend on the realized order. To endogenize the order of play, we analyze bidding procedures. For P\&C, we rely on the bidding game of \citet{Echenique2025} and show that its equilibrium payoffs coincide with the CIS. For PA\&C, we design a new bidding procedure that compensates for positional disadvantages and yields the Shapley value in equilibrium. Unlike in P\&C, where only the first position determines the surplus division, in PA\&C each position affects payoffs because players can reject earlier price vectors. Our procedure accounts for this feature and fully implements the Shapley value in a setting where order matters for all players.

Beyond these core mechanisms, we study two additional designs. First, we consider a variant of PA\&C in which the last mover cannot accept or reject the preceding price vector. Following the logic of P\&C, the last player must then be indifferent across all alternatives, and the resulting payoffs correspond to the Shapley value of the cooperative game associated with the CIS. Second, we examine balanced Vickrey-Clarke-Groves (VCG) payments (\citet{Vickrey1961,Clarke1971,Groves1973}) and show that they implement the Egalitarian Non‑Separable Contribution (ENSC) solution. The ENSC assigns each player her separable contribution to welfare and divides the remaining surplus equally. 
Moreover, ENSC is the dual of CIS.
These links highlight how efficient mechanisms can be tailored to achieve specific and interpretable surplus divisions.

The rest of the paper is organized as follows: a literature review is provided in the rest of the section. Section \ref{sec:Setting} introduces the theoretical framework, while Section \ref{sec:PAC} analyses the PA\&C. Section \ref{sec:Cooperative Games} studies the relationship between the P\&C, PA\&C SPE payoffs and the solution concepts in corresponding TU games. Section \ref{sec:Further} studies the extension of PA\&C and the balanced VCG payments. Finally, Section \ref{sec:Conclusion} concludes.

\subsection{Literature Review} \label{sec:Literature}

Our paper contributes to several strands of the literature. Most directly, it relates to the theory of implementation. Without transfers, efficient implementation is severely constrained: \cite{Hurwicz_1978} show that dictatorship is the only Pareto-efficient implementable rule in two-person societies, and more generally, \cite{Maskin99} establishes that monotonicity is necessary for Nash implementation, ruling out many desirable rules. When transfers are allowed, these constraints largely disappear. \cite{Moore1988} show that under quasi-linear preferences, any social choice rule can be subgame-perfect implemented. \cite{Aghion2012} have criticized this general result by showing that finite and countable mechanisms fail to achieve full implementation as soon as we depart from the complete information. \cite{Echenique2025} propose the Price-and-Choose mechanism (with infinite strategy spaces), which achieves efficient implementation through a sequential structure: one player sets prices, another chooses. We build on P\&C by introducing an accept-or-reject stage that preserves its simplicity and efficiency while altering the division of surplus.

The original \PC mechanism is reminiscent of the ``Divide and Choose'' paradigm used in allocation problems (e.g., \citet{Nicolo_Divide}), where players divide a surplus and choose among divisions. While our setting differs in that we deal with general social choice environments rather than divisible goods, the underlying concern, how to split the gains from efficient outcomes, remains central. To address this, we study how different mechanisms induce different surplus allocations and interpret the resulting equilibrium payoffs through the lens of solution concepts in cooperative game theory.

In particular, we connect the equilibrium outcomes of the mechanisms to canonical transferable utility (TU) game solution concepts. For instance, we show that the randomized version of PA\&C yields expected payoffs corresponding to the Shapley value (\citet{Shapley1953}), while the randomized version of P\&C corresponds to the Center of the Imputation Set (\citet{DriessenFunaki1991}). These insights highlight how different mechanisms implement not only efficient outcomes, but also different cooperative notions of fairness.

Our paper also builds on work that uses bidding procedures to implement TU-game solutions. \citet{PEREZCASTRILLO2001} introduce a non-cooperative mechanism to implement the Shapley value via sequential bidding. \citet{BrinkFunaki2015SCW} extend this approach in the context of discounted values. While these papers focus solely on surplus division, we embed bidding within a broader social choice framework where agents choose among alternatives. We show that by combining a bidding stage with an efficient dynamic mechanism, one can implement both efficiency and a cooperative solution concept either in expectation or ex post.

A closely related literature studies multi-bidding mechanisms in precisely our social choice setting. \citet{PerezCastrilloWettstein2002} propose a mechanism in which agents simultaneously submit a bid for every project together with an announcement of their preferred one, and show that every Nash equilibrium selects an efficient project. \citet{Ehlers2009} removes the project announcement and shows that existence of equilibrium is then no longer guaranteed, while \citet{Kamijo2014} extends the design to environments in which the cost of the selected project must be covered by the agents themselves. These mechanisms share our objective of implementing an efficient alternative in a social choice problem with transfers, but they differ from our sequential deterministic mechanisms in that they are static and they rely on lotteries.

Our focus on simple and intuitive mechanisms also connects with a growing body of experimental work that investigates the practical performance of implementation designs. Specifically, \citet{Funaki_Exp} provide an empirical assessment of both the P\&C and the PA\&C mechanisms. In a setting with two players and two alternatives, the paper shows that both P\&C and PA\&C achieve a high efficiency rate and demonstrate consistent subject behavior across protocols, even when compared to the well-known Ultimatum Game. This experimental validation underscores the practical viability of these simple dynamic mechanisms.

\section{The Model} \label{sec:Setting}

\paragraph{Primitives.} 
Let $N = \{1,2,\dots,n\}$, with $n \ge 2$, denote a finite set of individuals bargaining over a finite set of alternatives 
$\mathcal{A} = \{a_1, a_2, \dots, a_m\}$, where $m :=|\mathcal{A}| \ge 2$. 
Each individual $i$'s preferences over alternatives are represented by a utility function 
$u_i : \mathcal{A} \to \mathbb{R}$. 
Without loss of generality, we assume that $u_i(a) \ge 0$ for all $i \in N$ and all $a \in \mathcal{A}$.\footnote{This assumption guarantees that the associated TU-game is monotonic, so that each player makes a non-negative contribution to every coalition. Although not required for our implementation results, it simplifies the TU-game interpretation.} 
Let $u = (u_i(a))_{i \in N,\, a \in \mathcal{A}} \in \mathbb{R}^{n \times m}$ denote the utility profile. 
A social choice problem is therefore represented by the pair $(\mathcal{A}, u)$.

Monetary transfers $t = (t_1, \dots, t_n) \in \mathbb{R}^n$ are available. If alternative $a \in \mathcal{A}$ is selected and individual $i$ receives transfer $t_i$, her total utility takes the quasi-linear form $u_i(a) + t_i$.

\paragraph{Efficiency.}
For each $i \in N$, let $\mathcal{A}_i^{\ast}$ denote the set of efficient alternatives for the coalition of players $\{i, i+1, \ldots, n\}$, defined as:
\begin{equation}
    \mathcal{A}_i^{\ast}
    = \arg\max_{a \in \mathcal{A}} \sum_{j \ge i} u_j(a).
    \label{eq:a_i_star}
\end{equation}
In particular, the set of socially efficient alternatives is given by $\mathcal{A}_1^{\ast}$, which we denote simply by $\mathcal{A}^{\ast}$.
Note that the collection of sets $(\mathcal{A}_i^{\ast})_{i \in N}$ depends on the ordering of individuals.

\paragraph{Marginal Contribution.}
For each $i \in N$, let $x_i^{\ast}$ denote the \emph{marginal contribution} of individual $i$, defined by:
\begin{equation} \label{eq:x_star}
x^{\ast}_{i} =
\begin{cases} 
\displaystyle
\sum_{j \ge i} u_j(a_{i}^{\ast}) - \sum_{j > i} u_j(a_{i+1}^{\ast})
& \text{for all } i \in N \setminus \{n\}, \text{ with } a_i^{\ast} \in \mathcal{A}_i^{\ast} \text{ and } a_{i+1}^{\ast} \in \mathcal{A}_{i+1}^{\ast}, \\[1em]
u_n(a_n^{\ast})
& \text{for } i = n, \text{ with } a_n^{\ast} \in \mathcal{A}_n^{\ast}.
\end{cases}
\end{equation}
For each $i \in N$, $x_i^{\ast}$ is well-defined, as it does not depend on the particular choices of $a_i^{\ast}$, $a_{i+1}^{\ast}$, or $a_n^{\ast}$.

Intuitively, $x_i^{\ast}$ measures the marginal surplus generated by adding player $i$ to the coalition $\{i+1, \ldots, n\}$. The term $\sum_{j \ge i} u_j(a_i^{\ast})$ represents the total utility of the coalition when player $i$ joins $\{i+1, \ldots, n\}$ and an efficient alternative $a_i^{\ast}$ is selected, whereas $\sum_{j > i} u_j(a_{i+1}^{\ast})$ is the total utility of the coalition $\{i+1, \ldots, n\}$ under an efficient alternative $a_{i+1}^{\ast}$. Their difference therefore captures the surplus that player $i$ can bring. For the last player $n$, this expression reduces to her maximum attainable utility, $u_n(a_n^{\ast})$.

\paragraph{Subgame-Perfect Implementation}
A mechanism induces an extensive-form game. We say that a mechanism \emph{subgame-perfect implements} the set of efficient alternatives $\mathcal{A}^{\ast}$ if, for any profile of utility functions $u$, the following two conditions are satisfied in the game induced by the mechanism:
\begin{itemize}
    \item[(i)] For every subgame-perfect equilibrium, the selected alternative belongs to $\mathcal{A}^{\ast}$.
    \item[(ii)] For every alternative in $\mathcal{A}^{\ast}$, there is a subgame-perfect equilibrium that selects it.
\end{itemize}

\section{Price, Accept and Choose} \label{sec:PAC}

We propose the Price-Accept-and-Choose (PA\&C) mechanism, which builds on the
Price-and-Choose (P\&C) mechanism of \citet{Echenique2025}. Like P\&C, PA\&C
achieves subgame-perfect implementation of the efficient alternative. Section
\ref{sec:Cooperative Games} compares the equilibrium payoffs of the two mechanisms
using cooperative-game solution concepts.

\subsection{Two-Player Case}

The PA\&C protocol with two players proceeds as follows. Player 1 (the proposer) proposes a price vector $p \in \mathbb{R}^m$, assigning a price to each alternative, subject to the balanced-budget constraint that prices sum to zero across alternatives. Player 2 (the chooser) then either \textit{accepts} or \textit{rejects} the proposed price vector. If accepted, player 2 chooses an alternative $a_k \in \mathcal{A}$ and makes the transfer $p(a_k)$ to player 1. If rejected, player 2 chooses an alternative from $\mathcal{A}$ and no transfer is made.

Let $P$ denote the set of balanced-budget price vectors:
\begin{equation*}
    P = \left\{ p \in \mathbb{R}^m \;\middle|\; \sum_{k=1}^m p(a_k) = 0 \right\}.
\end{equation*}

Suppose the proposed price vector is $p \in P$, and let player 2's action be $c_2 \in C = \{A, R\}$, where $A$ denotes acceptance and $R$ denotes rejection.
The realized price vector $p^c \in P$ is then given by:
\begin{equation*}
    p^c = \begin{cases}
        p      & \text{if } c_2 = A, \\
        \mathbf{0} & \text{if } c_2 = R.
\end{cases}
\end{equation*}
The timing of the PA\&C mechanism is as follows:
\begin{enumerate}
    \item Player 1 proposes a price vector $p \in P$.
    \item Player 2 either accepts ($A$) or rejects ($R$) the price vector.
    \item Player 2 chooses an alternative $a \in \mathcal{A}$ and transfers $p^c(a)$ to player 1.
\end{enumerate}

A strategy profile of the game induced by the PA\&C mechanism is a pair
$\sigma = (\sigma_1, \sigma_2)$, where player 1's strategy is a price vector
$\sigma_1 \in P$ and player 2's strategy is a function $\sigma_2 : P \to C \times \mathcal{A}$.
Given a price vector $p$ proposed by player 1 and the strategy subsequently selected by
player 2, the players' payoffs are:
\begin{equation*}
    y = \left(y_1, y_2\right) = \left(u_1(a) + p^c(a), u_2(a) - p^c(a)\right).
\end{equation*}

\begin{prop} \label{prop:SPE_PAC_2}
Consider the PA\&C mechanism with two players. Then:
\begin{enumerate}[label=\roman*)]
    \item In every subgame perfect equilibrium, the equilibrium payoff of each player $i \in N$ coincides
    with the marginal contribution $x_i^{\ast}$.
    \item The PA\&C mechanism subgame-perfect implements the set of efficient alternatives.
\end{enumerate}
\end{prop}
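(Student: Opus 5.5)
We argue by backward induction, with the marginal contributions written out explicitly. With two players, $x_2^\ast = \max_a u_2(a) =: \bar u_2$ and $x_1^\ast = W - \bar u_2$, where $W := \max_a (u_1(a)+u_2(a))$. The plan has four steps: (a) describe player 2's best responses in every subgame; (b) show that player 1 can guarantee herself $W-\bar u_2$, up to an arbitrarily small amount; (c) show that player 2 always gets at least $\bar u_2$; (d) combine these with feasibility to pin down payoffs and the chosen alternative, and then build equilibria supporting each efficient alternative.

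\textbf{Step (a).} After a proposal $p\in P$, rejecting gives player 2 the value $\bar u_2$, and accepting gives $v(p):=\max_a (u_2(a)-p(a))$. So in any SPE player 2 accepts when $v(p)>\bar u_2$, rejects when $v(p)<\bar u_2$, and in either case picks a maximizer of the relevant objective. Ties may be broken in any way.

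\textbf{Step (b).} Fix $a^\ast\in\mathcal{A}^\ast$. We look for prices under which player 2 is indifferent across all alternatives when she accepts, with common value $\bar u_2$. This means $p(a)=u_2(a)-\bar u_2 + c$. Budget balance forces $mc = m\bar u_2 - \sum_a u_2(a)$, so $c=\bar u_2-\bar u$, where $\bar u$ is the average of $u_2$. The resulting value is $v(p)=\bar u_2 - c=\bar u \le \bar u_2$. So the indifference approach fails, and the construction needs rethinking.

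The fix is that the prices need not make player 2 indifferent. It is enough that $a^\ast$ gives her exactly $\bar u_2$, i.e. $p(a^\ast)=u_2(a^\ast)-\bar u_2$, while every other alternative is strictly worse. We achieve this by setting $p(a)$ large on the other alternatives, with the budget constraint absorbed by choosing the remaining prices appropriately. Concretely, set $p(a^\ast)=u_2(a^\ast)-\bar u_2-\varepsilon$ and distribute $-p(a^\ast)$ over the other alternatives so that each $p(a)>u_2(a)-\bar u_2$.

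This is possible whenever $\sum_{a\ne a^\ast}(u_2(a)-\bar u_2) < -p(a^\ast)$, that is, $\sum_a (u_2(a)-\bar u_2) < \varepsilon$. This holds because the left side is $\le 0$. If it equals $0$, then $u_2$ is constant, and we can take $\varepsilon>0$.

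So player 2 strictly prefers to accept and choose $a^\ast$, and player 1 gets $u_1(a^\ast)+u_2(a^\ast)-\bar u_2-\varepsilon = W-\bar u_2-\varepsilon$. Hence in any SPE, $y_1\ge W-\bar u_2$.

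\textbf{Steps (c) and (d).} Player 2 can always reject, so $y_2\ge\bar u_2$. Feasibility gives $y_1+y_2=u_1(a)+u_2(a)\le W$. Together these force $y_1=x_1^\ast$, $y_2=x_2^\ast$, and $u_1(a)+u_2(a)=W$, so the selected alternative is efficient. This proves i) and part (i) of implementation.

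For part (ii), fix $a^\ast$ and use $\varepsilon=0$ in the construction above. Player 2 is then indifferent between rejecting and accepting-and-choosing $a^\ast$, and strictly prefers $a^\ast$ among accepted options. We take her strategy to accept and choose $a^\ast$ at this $p$, and to best-respond elsewhere using the rule from step (a). Given this strategy, player 1 obtains $W-\bar u_2$, which is the supremum of her possible payoffs, so proposing $p$ is optimal. Player 2's strategy is well-defined, since every subgame has a maximizer over a finite set.

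\textbf{Main obstacle.} The delicate point is existence of an SPE at all, which comes down to the tie-breaking at the boundary $v(p)=\bar u_2$. The strategy must resolve this tie in player 1's favour on the path, as the $\varepsilon=0$ construction does, so that her supremum is attained.
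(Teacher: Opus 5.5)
Your proposal is correct and follows essentially the same route as the paper. Player 2's rejection option gives her the lower bound $x_2^\ast$. An $\varepsilon$-perturbed price vector that makes accepting and choosing $a^\ast$ strictly optimal gives player 1 the lower bound $x_1^\ast$; the paper simply spreads $-p(a^\ast)$ uniformly over the other alternatives. Feasibility then pins down both payoffs and efficiency, and the $\varepsilon=0$ version, with player 2's tie resolved toward accepting and choosing $a^\ast$, supports each efficient alternative.

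One small fix: at $\varepsilon=0$ you should require only the weak inequalities $p(a)\ge u_2(a)-\bar u_2$. When $u_2$ is constant the strict version is infeasible, because $p=\mathbf{0}$ is forced. Weak preference is all that your equilibrium construction needs.
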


\begin{proof}[Proof of Proposition \ref{prop:SPE_PAC_2}]

Let $y = (y_1, y_2) \in \mathbb{R}^2$ be any SPE payoff vector. We establish three claims.
Claim~\ref{cl:yi_le_xi} shows that each player's payoff is at least her marginal contribution.
Claim~\ref{cl:yi_eq_xi} strengthens this to equality and shows that player 2 selects an efficient alternative, establishing part~(i).
Claim~\ref{cl:a_star} shows that every efficient alternative is supported by some SPE, establishing part~(ii).

\begin{claim} \label{cl:yi_le_xi}
$y_i \geq x_i^{\ast}$ for each $i \in N$.
\end{claim}
\begin{proof}[Proof of Claim \ref{cl:yi_le_xi}]
Player 2 can guarantee $x_2^{\ast}$ by rejecting and choosing any $a_2^{\ast} \in \mathcal{A}_2^{\ast}$, so $y_2 \geq x_2^{\ast}$.

For player 1, suppose for contradiction that $y_1 < x_1^{\ast}$, so that $x_1^{\ast} - y_1 = \varepsilon > 0$. Fix any $a^{\ast} \in \mathcal{A}^{\ast}$ and consider the price vector $p \in P$ defined by
\[
p(a^{\ast}) = u_2(a^{\ast}) - x_2^{\ast} - \tfrac{\varepsilon}{2}, \qquad
p(a) = -\frac{1}{m-1}\!\left(u_2(a^{\ast}) - x_2^{\ast} - \tfrac{\varepsilon}{2}\right) \text{ for } a \neq a^{\ast}.
\]
Under this price vector, choosing $a \neq a^{\ast}$ yields player 2 strictly less than $x_2^{\ast}$, while choosing $a^{\ast}$ yields strictly more than $x_2^{\ast}$; hence player 2 accepts and selects $a^{\ast}$. Player 1's payoff is then $x_1^{\ast} - \frac{\varepsilon}{2} > y_1$, a contradiction.
\end{proof}

\begin{claim} \label{cl:yi_eq_xi}
$y_i = x_i^{\ast}$ for each $i \in N$, and player 2 selects an alternative in $\mathcal{A}^{\ast}$.
\end{claim}
\begin{proof}[Proof of Claim \ref{cl:yi_eq_xi}]
Let $a'$ be the alternative chosen in equilibrium. Since transfers cancel,
\[
y_1 + y_2 = u_1(a') + u_2(a').
\]
By Claim~\ref{cl:yi_le_xi}, $y_1 + y_2 \geq x_1^{\ast} + x_2^{\ast} = u_1(a^{\ast}) + u_2(a^{\ast})$ for any $a^{\ast} \in \mathcal{A}^{\ast}$. By efficiency of $\mathcal{A}^{\ast}$, $u_1(a') + u_2(a') \leq u_1(a^{\ast}) + u_2(a^{\ast})$. These two inequalities force equality throughout, giving $y_1 + y_2 = x_1^{\ast} + x_2^{\ast}$. Combined with $y_i \geq x_i^{\ast}$ for each $i$, we obtain $y_i = x_i^{\ast}$ for each $i \in N$. Moreover, $u_1(a') + u_2(a') = u_1(a^{\ast}) + u_2(a^{\ast})$ implies $a' \in \mathcal{A}^{\ast}$.
\end{proof}

\begin{claim} \label{cl:a_star}
For each $a^{\ast} \in \mathcal{A}^{\ast}$, there exists an SPE in which $a^{\ast}$ is chosen.
\end{claim}
\begin{proof}[Proof of Claim \ref{cl:a_star}]
Fix any $a^{\ast} \in \mathcal{A}^{\ast}$ and consider the price vector $p \in P$ defined by
\[
p(a^{\ast}) = u_2(a^{\ast}) - x_2^{\ast}, \qquad
p(a) = -\frac{1}{m-1}\!\left(u_2(a^{\ast}) - x_2^{\ast}\right) \text{ for } a \neq a^{\ast}.
\]
By definition of $x_2^{\ast}$, we have $p(a^{\ast}) \leq 0$ and $p(a) \geq 0$ for $a \neq a^{\ast}$, so $u_2(a) - p(a) \leq x_2^{\ast}$ for all $a$. Accepting and choosing $a^{\ast}$ yields player 2 exactly $x_2^{\ast}$, which weakly dominates all other actions (accept another alternative, or reject). Hence accepting and choosing $a^{\ast}$ is a best response for player 2, and player 1's payoff is $x_1^{\ast}$.

If player 1 deviates to induce a different efficient alternative $a' \in \mathcal{A}^{\ast}$, Claim~\ref{cl:yi_eq_xi} gives $y_1' = x_1^{\ast}$, so the deviation is not profitable. Hence $a^{\ast}$ is supported by an SPE.
\end{proof}

\noindent Claims \ref{cl:yi_le_xi}--\ref{cl:a_star} establish Proposition~\ref{prop:SPE_PAC_2}.
\end{proof}

In contrast to P\&C, the PA\&C mechanism eliminates the first-mover advantage, transferring the strategic advantage to the second mover. This shift is driven by player 2's option to reject the proposed price vector, thereby guaranteeing herself a payoff equal to her standalone value $x_2^{\ast}$. As we show in Section \ref{sec:Cooperative Games}, this rejection option systematically favors the last mover, a crucial distinction from P\&C.

\subsection{Many-Player Case}

We now extend the results of Proposition~\ref{prop:SPE_PAC_2} to an arbitrary number
of players, showing that its conclusions continue to hold in the $n$-player PA\&C mechanism.

The mechanism proceeds sequentially from player 1 to player $n$. Player 1 publicly proposes a balanced-budget price vector $p_1 \in P$ to player 2. 
For each $i = 2, \dots, n-1$, player $i$ observes all previous moves, decides whether to accept ($A$) or reject ($R$) all previously proposed price vectors, and then submits her own proposal $p_i \in P$ to player $i+1$.
Finally, player $n$ accepts or rejects the current price proposal and chooses an alternative $a \in \mathcal{A}$. Only accepted price proposals are implemented.

Let $h_i$ denote the history observed by player $i$ before she chooses her action $c_i \in \{A,R\}$, consisting of
the sequence:
\[
h_i = \left(p_1, c_2, p_2, c_3, p_3, \dots, c_{i-1}, p_{i-1}\right),
\]
and let $H_i$ be the set of all such histories. For each $i < n$, player $i$'s
pure behavioral strategy is a function:
\[
\sigma_i : H_i \to C \times P,
\]
where the first component specifies whether player $i$ accepts or rejects the current
proposal, and the second specifies the price vector she proposes to player $i+1$.
We write $\sigma_i(h_i) = \bigl(c_i(h_i),\, p_i(h_i)\bigr)$.

Player $n$'s strategy is a function
$
\sigma_n : H_n \to C \times \mathcal{A},
$
where she first accepts or rejects the proposed prices and then selects an alternative
$a \in \mathcal{A}$.

Consider an equilibrium vector of proposed prices $p = (p_i)_{i=1}^{n-1}$. Each
player $i > 1$ accepts or rejects all previously proposed price vectors. The
realized price vector need not coincide with the proposed one, since any player
may have rejected prior proposals.

The realized price vector for player $i$, denoted $p_i^c$, determines the actual
transfers once player $n$ selects an alternative $a \in \mathcal{A}$. It is defined by:
\begin{equation*}
    p^c_i =
    \begin{cases}
        p_i & \text{if } c_j = A \text{ for all } j > i, \\
        \mathbf{0} & \text{otherwise,}
    \end{cases}
\end{equation*}
for all $i \in \{1, \dots, n-1\}$. That is, $p_i$ is implemented only if every
subsequent player accepts.

Given alternative $a \in \mathcal{A}$, transfers are made sequentially: if all
prices are accepted, player $n$ pays $p_{n-1}^c(a)$ to player $n-1$, who pays
$p_{n-2}^c(a)$ to player $n-2$, and so on, with player 2 paying $p_1^c(a)$ to
player 1. With quasi-linear preferences, the resulting payoffs are:
\begin{align*}
    y_n(p^c, a) &= u_n(a) - p^c_{n-1}(a), \\
    y_i(p^c, a) &= u_i(a) - p^c_{i-1}(a) + p^c_i(a) \quad \text{for } i = 2, \dots, n-1, \\
    y_1(p^c, a) &= u_1(a) + p^c_1(a).
\end{align*}

\begin{thm} \label{prop:PAC_General}
Consider the PA\&C mechanism with $n \geq 2$ players. Then:
\begin{enumerate}[label=\roman*)]
    \item In every subgame perfect equilibrium, the equilibrium payoff of each player $i \in N$ coincides
    with her marginal contribution $x_i^{\ast}$.
    \item The PA\&C mechanism subgame-perfect implements the set of efficient alternatives.
\end{enumerate}
\end{thm}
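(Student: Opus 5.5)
We argue by backward induction on the number of players, taking Proposition~\ref{prop:SPE_PAC_2} as the base case $n=2$. The key structural fact is that a rejection by player $i$ erases all previously proposed prices. The continuation game after player $i$ rejects is therefore exactly a PA\&C mechanism among players $\{i,i+1,\dots,n\}$ in which player $i$ acts as first mover. In that game the relevant efficient sets are $\mathcal{A}_j^{\ast}$ for $j\ge i$, and the marginal contributions $x_j^{\ast}$ for $j \ge i$ are unchanged, because their definition depends only on players $j,\dots,n$. The induction hypothesis is the statement of Theorem~\ref{prop:PAC_General} for every PA\&C game with fewer than $n$ players, in its general labelling. It says that every SPE of the continuation game gives each $j\ge i$ exactly $x_j^{\ast}$, selects an alternative in $\mathcal{A}_i^{\ast}$, and that every alternative of $\mathcal{A}_i^{\ast}$ is supported. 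Throughout, $\sum_{j\ge 1}x_j^{\ast}=\sum_j u_j(a^{\ast})$ for $a^{\ast}\in\mathcal{A}^{\ast}$, by telescoping.

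\emph{Step 1 (lower bounds).} We show $y_i\ge x_i^{\ast}$ for every $i\ge 2$ in every SPE, and in fact at every history reaching $i$. Player $i$ may reject, and by induction her continuation payoff in the reduced game is then $x_i^{\ast}$. Since she accumulates nothing from rejected earlier proposals, this gives the bound. We also need the prices to be accepted on the continuation path whenever accepting is weakly optimal. That is automatic here, because the bound is on equilibrium payoffs, not on actions.

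\emph{Step 2 (player 1).} We show $y_1\ge x_1^{\ast}$, mimicking Claim~\ref{cl:yi_le_xi}. Suppose $y_1=x_1^{\ast}-\varepsilon$. Player 1 then deviates to $p$ with $p(a^{\ast})=\sum_{j\ge2}u_j(a^{\ast})-\sum_{j\ge 2}x_j^{\ast}-\varepsilon/2$ on a fixed $a^{\ast}\in\mathcal{A}^{\ast}$, with the negative of this amount spread equally over the other alternatives. We must show that every continuation SPE after this deviation gives player 1 at least $x_1^{\ast}-\varepsilon/2$. This is the main obstacle, because acceptance decisions are made by intermediate players while the choice is made by player $n$. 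The plan is as follows. Along the continuation, Step 1 still holds for players $2,\dots,n$, so their total is at least $\sum_{j\ge2}x_j^{\ast}$. Total payoff equals $\sum_j u_j(a)$ at the chosen $a$ in either case. If $p_1$ survives and $a$ is chosen, player 1 receives $u_1(a)+p(a)$. The players $2,\dots,n$ jointly receive $\sum_{j\ge2}u_j(a)-p(a)$, and this must be at least $\sum_{j\ge 2}x_j^{\ast}$. For $a\ne a^{\ast}$ we have $p(a)>0$, which is strict if we choose signs appropriately by shifting with $\varepsilon$. Summing shows that $\sum_j u_j(a)$ is close to the maximum, so player 1's payoff is total minus the others' share. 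The others' share cannot exceed $\sum_{j\ge2}x_j^{\ast}+\varepsilon/2$ by too much. We will need a second inequality: someone among $2,\dots,n$ would strictly gain by deviating to keep $p_1$ and extract the slack. If instead some player $i\ge2$ rejects $p_1$, then by induction the continuation total for $\{i,\dots,n\}$ is $\sum_{j\ge i}x_j^{\ast}$, while players $2,\dots,i-1$ get only their $u$'s plus transfers. Here we will argue that player 2, who accepts first, could instead propose prices that make acceptance by the rest strictly optimal, so rejection cannot be on path. We expect to formalize this via a secondary induction showing that, in any subgame where previous prices are on the table, the continuation players $\ge i$ collectively get exactly what the surplus minus outstanding liabilities allows. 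We would try this first.

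\emph{Step 3 (equality, efficiency, existence).} Steps 1 and 2 give $\sum_i y_i\ge\sum_i x_i^{\ast}=\max_a\sum_j u_j(a)$. As in Claim~\ref{cl:yi_eq_xi}, this forces $y_i=x_i^{\ast}$ and an efficient choice, which establishes (i) and part (i) of implementation. For existence, given $a^{\ast}\in\mathcal{A}^{\ast}$, we construct prices recursively. Each player $i<n$ proposes $p_i$ that makes $i+1$ indifferent among all alternatives at the payoff level $\sum_{j>i}x_j^{\ast}$ for the coalition $\{i+1,\dots,n\}$, with $a^{\ast}$ favored weakly, as in Claim~\ref{cl:a_star}. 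Everyone accepts, and player $n$ picks $a^{\ast}$. Off-path behaviour is given by induction-hypothesis equilibria. Deviations are unprofitable because by Step 3 any SPE continuation yields $x_i^{\ast}$.
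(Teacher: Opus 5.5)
Your architecture matches the paper's: induction on $n$, the rejection option giving $y_i\ge x_i^\ast$ for $i\ge 2$, the same deviation $\tilde p_1$ built from $u_E$ with margin $\varepsilon/2$, and the adding-up argument forcing $y_i=x_i^\ast$ and efficiency. The gap is Step~2, the only non-routine step, which you leave as a plan.

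The secondary induction you announce is also false as stated. The players from $i$ on do not ``collectively get exactly the surplus minus outstanding liabilities'': they can always secure $v(\{i,\dots,n\})$ by having $i$ reject, and this is strictly more when the outstanding price is unfavourable. Your intended deviation, in which player~2 keeps $\tilde p_1$ and passes on a slightly less generous offer, is the right one. But showing that the rest then accept cannot be done by a one-shot best-response check, because players $3,\dots,n$ accept or reject based on their own continuation proposals. The paper covers this step in one line by invoking the induction hypothesis on the continuation after player~2 accepts $\tilde p_1$. As you sensed, that continuation is not literally an $(n-1)$-player PA\&C game, because $\tilde p_1$ is erased if any later player rejects. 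So an extra argument is needed either way.

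The missing piece is a backward-induction lemma restricted to favourable offers. For $1\le k\le n-1$ and $\delta>0$, let $q_k^\delta\in P$ set $q_k^\delta(a^\ast)=\sum_{j>k}u_j(a^\ast)-\sum_{j>k}x_j^\ast-\delta<0$ and spread its negative equally over $a\neq a^\ast$.

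\emph{Claim:} after any history at which $p_k=q_k^\delta$ has been proposed, every SPE continuation has all of $k+1,\dots,n$ accepting and $a^\ast$ chosen.

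The engine is telescoping. At any terminal node, the total payoff of $\{k+1,\dots,n\}$ is $\sum_{j>k}u_j(a)-p_k^c(a)$.
\begin{itemize}
\item If everyone accepts and $a=a^\ast$, this total equals $\sum_{j>k}x_j^\ast+\delta$.
\item Otherwise it is at most $v(\{k+1,\dots,n\})=\sum_{j>k}x_j^\ast$: after any rejection $p_k^c=\mathbf{0}$, and with full acceptance and $a\neq a^\ast$ we have $q_k^\delta(a)>0$.
\end{itemize}
For $k=n-1$ the claim is immediate. For $k<n-1$, player $k+1$ can accept and propose $q_{k+1}^{\delta'}$ for any $\delta'\in(0,\delta)$. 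By the claim at $k+1$, this pays her $x_{k+1}^\ast+\delta-\delta'$, so in any SPE continuation she gets at least $x_{k+1}^\ast+\delta$. Each $j\ge k+2$ gets at least $x_j^\ast$ by rejecting, using the main induction hypothesis. The coalition therefore gets at least $\sum_{j>k}x_j^\ast+\delta$, which forces full acceptance and $a^\ast$. Since $\tilde p_1=q_1^{\varepsilon/2}$, player~1's deviation yields exactly $x_1^\ast-\varepsilon/2$, which closes Step~2.

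Your existence sketch also needs repair.
\begin{itemize}
\item Making the next player indifferent among all alternatives is the P\&C construction, not the analogue of Claim~\ref{cl:a_star}. At the coalition level it is generally incompatible with budget balance. The right on-path proposal sets $p_i(a^\ast)=\sum_{j>i}u_j(a^\ast)-\sum_{j>i}x_j^\ast$ and spreads the remainder evenly, so that $\{i+1,\dots,n\}$ gets exactly $\sum_{j>i}x_j^\ast$ at $a^\ast$ and weakly less elsewhere.
\item You cannot rule out deviations by citing Step~3, which characterizes SPEs rather than certifying that a given profile is one. You must specify play after acceptances of arbitrary outstanding prices, which the induction hypothesis does not cover. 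You then use the telescoping bound above to show that no deviation gives player $i$ more than $x_i^\ast$.
\end{itemize}
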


\begin{proof}
The proof proceeds by induction on $n$. The base case $n = 2$ is Proposition~\ref{prop:SPE_PAC_2}. Assume the proposition holds for $n - 1$ players.

Fix an arbitrary SPE of the $n$-player PA\&C mechanism. Let $y = (y_1, \dots, y_n) \in \mathbb{R}^n$ be the equilibrium payoff vector and $a$ the equilibrium outcome. Since transfers cancel,
\begin{equation} \label{eq:sum_payoffs}
    \sum_{i \in N} y_i = \sum_{i \in N} u_i(a).
\end{equation}

\textbf{Claim 1:} $y_i \geq x_i^{\ast}$ for each $i \geq 2$.

If player $i \geq 2$ rejects, the continuation game is a PA\&C mechanism with $n - (i-1)$ players. By the induction hypothesis, player $i$ obtains $x_i^{\ast}$ in that subgame, so she can always guarantee at least $x_i^{\ast}$.

\textbf{Claim 2:} $y_1 \geq x_1^{\ast}$.

Suppose for contradiction that $y_1 < x_1^{\ast}$, and let $\varepsilon := x_1^{\ast} - y_1 > 0$. Fix any $a^{\ast} \in \mathcal{A}^{\ast}$ and define:
\[
u_E := \sum_{j \geq 2} u_j(a^{\ast}) - \sum_{j \geq 2} x_j^{\ast}.
\]
Since $\sum_{j \geq 2} x_j^{\ast} = \max_{a \in \mathcal{A}} \sum_{j \geq 2} u_j(a) \geq \sum_{j \geq 2} u_j(a^{\ast})$, we have $u_E \leq 0$. Consider the deviation $\tilde{p}_1 \in P$ defined by
\[
\tilde{p}_1(a^{\ast}) = u_E - \frac{\varepsilon}{2}, \qquad
\tilde{p}_1(a) = -\frac{1}{m-1}\!\left(u_E - \frac{\varepsilon}{2}\right) \text{ for } a \neq a^{\ast}.
\]
Since $u_E \leq 0$, we have $\tilde{p}_1(a) > 0$ for all $a \neq a^{\ast}$, making $a^{\ast}$ the unique maximizer of $\sum_{j \geq 2} u_j(a) - \tilde{p}_1(a)$. By the induction hypothesis applied to the continuation game after player 2 accepts $\tilde{p}_1$, player 2 accepts (since doing so yields strictly more than $x_2^{\ast}$) and $a^{\ast}$ is selected. A straightforward calculation then gives that player 1's payoff from $\tilde{p}_1$ is $x_1^{\ast} - \frac{\varepsilon}{2} > y_1$, contradicting equilibrium. Hence $y_1 \geq x_1^{\ast}$.

\textbf{Claim 3:} $y_i = x_i^{\ast}$ for all $i \in N$.

By~\eqref{eq:sum_payoffs} and efficiency of $\mathcal{A}^{\ast}$,
\[
\sum_{i \in N} y_i = \sum_{i \in N} u_i(a) \leq \sum_{i \in N} u_i(a^{\ast}) = \sum_{i \in N} x_i^{\ast}.
\]
Combined with $y_i \geq x_i^{\ast}$ for all $i$ (Claims~1 and~2), this forces $y_i = x_i^{\ast}$ for every $i \in N$.

\textbf{Claim 4:} Every SPE selects some $a^{\ast} \in \mathcal{A}^{\ast}$, and every $a^{\ast} \in \mathcal{A}^{\ast}$ is selected by some SPE.

The first part follows immediately: if the equilibrium outcome $a \notin \mathcal{A}^{\ast}$, then $\sum_{i \in N} u_i(a) < \sum_{i \in N} x_i^{\ast}$, contradicting Claim~3 via~\eqref{eq:sum_payoffs}.

For the second part, fix any $a^{\ast} \in \mathcal{A}^{\ast}$ and consider strategies under which ties among efficient alternatives are broken in favor of $a^{\ast}$. Since all alternatives in $\mathcal{A}^{\ast}$ generate the same total surplus, this tie-breaking leaves every player's marginal contribution unchanged. By Claim~3, no player has a profitable deviation, so the resulting strategy profile is an SPE that selects $a^{\ast}$.
\end{proof}

\section{Relationship with TU-game Solutions} \label{sec:Cooperative Games}

Both the PA\&C and P\&C mechanisms subgame-perfect implement the set of efficient alternatives. While efficiency ensures that total surplus is maximized, the distribution of this surplus depends on the mechanism employed. 
To analyze equilibrium payoffs, we draw on classical solution concepts for the transferable-utility (TU) games. 
We show that the PA\&C and P\&C mechanisms correspond to two canonical solutions of the associated TU game: the \emph{Shapley value} \citep{Shapley1953} and the \emph{Center of the Imputation Set value} \citep{DriessenFunaki1991}, respectively.

We associate two TU games with each social choice problem $(\mathcal{A}, u)$.
{In both, a coalition $S$ with $|S| \ge 2$ that takes the collective decision selects the alternative maximizing its aggregate utility and redistributes the resulting surplus among its members through transfers, so that its worth is $\max_{a \in \mathcal{A}} \sum_{i \in S} u_i(a)$.\footnote{Equivalently, each coalition is assumed to be able to secede and take the collective decision into its own hands, its members being free to redistribute the resulting aggregate utility among themselves. In our mechanisms this is not merely a modeling convention: when player $i$ rejects the proposed prices, the preceding players are removed from the transfer system and the remaining coalition $\{i, i+1, \dots, n\}$ does precisely this.} The two games differ only in how they treat the standalone worth.}

\begin{definition}[Game $v$] \label{def:v}
Given a social choice problem $(\mathcal{A}, u)$, the TU game $v$ is defined by:
\begin{equation} \label{eq:def_v}
    v(S) = \max_{a \in \mathcal{A}} \sum_{i \in S} u_i(a),
    \qquad \text{for all } S \subseteq N.
\end{equation}
\end{definition}

\begin{definition}[Game $\tilde{v}$]
\label{def:v_tilde}
Given a social choice problem $(\mathcal{A}, u)$, the TU game $\tilde{v}$ is defined by:
\begin{equation*} 
\tilde{v}(S) =
\begin{cases}
v(S), & \text{if } |S| \geq 2, \\[4pt]
\displaystyle \frac{1}{m} \sum_{k=1}^m u_i(a_k) =: \mathrm{Avg}_i, & \text{if } S = \{i\}.
\end{cases}
\end{equation*}
\end{definition}

{The two specifications of the singleton worth correspond to two standard benchmarks for the position of a player who is party to no agreement. 
Under $v$, the singleton worth is given by the same expression as that of any other coalition, $v(\{i\}) = \max_{a \in \mathcal{A}} u_i(a)$: player $i$ can secure her most preferred alternative on her own, as if she held a veto over the collective decision. 
Under $\tilde{v}$, player $i$ has no influence whatsoever over which alternative is selected and faces all of them symmetrically, so that her worth is her expected utility when the alternative is chosen uniformly at random, $\tilde{v}(\{i\}) = \mathrm{Avg}_i$. 
The former is the standalone worth of a player endowed with full individual bargaining power; the latter, that of a player endowed with none.
Our results show that the choice between these two benchmarks, and hence between the two solution concepts they generate, is exactly what is at stake in the design choice of whether players may reject the proposed prices.}

\subsection{Price and Choose}

We formally define the P\&C mechanism of \citet{Echenique2025}. The mechanism
follows the same sequential structure as PA\&C, with one crucial difference:
players have no option to accept or reject the previously proposed price vector.
The timing is as follows:
\begin{enumerate}[label=\roman*)]
    \item Player 1 proposes a price vector $p_1 \in P$.
    \item For each $i = 2, \dots, n-1$, player $i$ proposes a price vector $p_i \in P$
    to player $i+1$.
    \item Player $n$ chooses an alternative $a \in \mathcal{A}$.
    \item For each $i < n$, player $i+1$ transfers $p_i(a)$ to player $i$.
\end{enumerate}

\begin{prop}[\citet{Echenique2025}] \label{prop:PC_General}
Consider the P\&C mechanism with $n$ players. Then:
\begin{enumerate}[label=\roman*)]
    \item In every SPE, each player $i \geq 2$ obtains the payoff equal to $\mathrm{Avg}_i$, and player~1 obtains the payoff equal to $v(N) - \sum_{j \neq 1} \mathrm{Avg}_j$.
    \item The P\&C mechanism subgame-perfect implements the set of efficient alternatives.
\end{enumerate}
\end{prop}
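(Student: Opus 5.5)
The plan is to mirror the structure of the proof of Theorem~\ref{prop:PAC_General}: lower bounds on each player's payoff, then an adding-up argument. The new ingredient is that there is no rejection option. The outside option of player $i\ge 2$ is therefore no longer a marginal contribution. It is the guarantee she obtains against whatever proposal $p_{i-1}$ she receives, and this guarantee is exactly $\mathrm{Avg}_i$.

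\textbf{Step 1 (lower bounds for $i\ge 2$).} I would show that in every SPE, player $i\ge 2$ obtains at least $\mathrm{Avg}_i$, by induction backwards from player $n$.
\begin{itemize}
\item \emph{Player $n$.} Facing $p_{n-1}\in P$, she gets $\max_a \bigl(u_n(a)-p_{n-1}(a)\bigr)$. This is at least the average of $u_n(a)-p_{n-1}(a)$ over $a\in\mathcal A$, which equals $\mathrm{Avg}_n$ because $\sum_k p_{n-1}(a_k)=0$.
\item \emph{Player $i$ with $2\le i<n$.} Suppose she faces $p_{i-1}$. She can propose $p_i(a)=-u_i(a)+p_{i-1}(a)+c$, where $c$ is the constant that makes $p_i$ balanced. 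Then her payoff $u_i(a)-p_{i-1}(a)+p_i(a)=c$ is the same whatever alternative is eventually chosen. The balancing constant is $c=\mathrm{Avg}_i-\frac1m\sum_k p_{i-1}(a_k)=\mathrm{Avg}_i$.
\end{itemize}
Hence $y_i\ge \mathrm{Avg}_i$ for all $i\ge2$, regardless of what happens downstream. This deviation argument requires no induction hypothesis at all.

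\textbf{Step 2 (lower bound for player 1).} I would show $y_1\ge v(N)-\sum_{j\ge2}\mathrm{Avg}_j$ by an $\varepsilon$-deviation, as in Claim~2 of Theorem~\ref{prop:PAC_General}. Fix $a^\ast\in\mathcal A^\ast$ and write $\delta=\varepsilon/2$. Player 1 proposes
\[
p_1(a)=\sum_{j\ge2}u_j(a)-\sum_{j\ge2}\mathrm{Avg}_j-\delta\,\mathbf 1[a=a^\ast]+\tfrac{\delta}{m-1}\mathbf 1[a\neq a^\ast]+\kappa,
\]
with $\kappa$ chosen for balance. In fact the averages cancel, so $\kappa=0$.

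The key is to show that the continuation among players $2,\dots,n$ must then select $a^\ast$. I would establish this by an auxiliary backward-induction lemma proved jointly with Step 1. The lemma says: in any continuation after $p_{i-1}$, the followers $i,\dots,n$ jointly obtain at most $\max_a\bigl[\sum_{j\ge i}u_j(a)-p_{i-1}(a)\bigr]$, and each obtains at least her average. With the proposal above, the net surplus $\sum_{j\ge2}u_j(a)-p_1(a)$ is
\begin{itemize}
\item $\sum_{j\ge2}\mathrm{Avg}_j+\delta$ at $a^\ast$, and
\item $\sum_{j\ge2}\mathrm{Avg}_j-\delta/(m-1)$ elsewhere.
\end{itemize}
Any outcome $a\ne a^\ast$ would therefore leave the followers with strictly less than the sum of their guarantees, a contradiction. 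So $a^\ast$ is chosen, and player 1 receives $u_1(a^\ast)+p_1(a^\ast)=v(N)-\sum_{j\ge2}\mathrm{Avg}_j-\delta$. If $y_1$ fell short of this bound by $\varepsilon$, the deviation would be profitable, so the bound holds.

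\textbf{Step 3 (equality and implementation).} Payoffs sum to $\sum_i u_i(a)\le v(N)$, and the lower bounds from Steps 1 and 2 sum to exactly $v(N)$. This forces every bound to hold with equality and forces $a\in\mathcal A^\ast$, which gives part (i) and the first half of part (ii).

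For existence, I would construct strategies explicitly:
\begin{itemize}
\item each intermediate player uses the indifference proposal from Step 1;
\item player $n$ breaks ties toward a prescribed efficient $a^\ast$;
\item player 1 proposes $p_1(a)=\sum_{j\ge2}u_j(a)-\sum_{j\ge2}\mathrm{Avg}_j$.
\end{itemize}
Under these strategies, after any history every follower is indifferent across all alternatives, so player $n$'s tie-breaking can select the alternative maximizing $u_1+p_1$. I would then verify subgame perfection off path. The main obstacle I expect is this point: an SPE must exist after every $p_1$, including proposals for which the upper envelope of player $n$'s choice is attained only in a knife-edge way. The indifference construction handles it, because it makes every follower's payoff independent of $a$ and thus leaves player $n$ free to break ties as needed.
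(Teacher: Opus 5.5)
Your Steps 1--3 are sound. They deliver part~(i) and the ``every SPE is efficient'' half of part~(ii), and they follow the lower-bound-plus-adding-up structure the paper uses for Theorem~\ref{prop:PAC_General} and Proposition~\ref{prop:PACO_General}. The paper itself gives no proof of this proposition and cites \citet{Echenique2025}.

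The gap is in the existence half of part~(ii): your strategy profile is not subgame perfect. The self-indifference proposal $p_i=-u_i+p_{i-1}+\mathrm{Avg}_i$ is a best response only at histories where $\sum_{j\ge i}u_j(a)-p_{i-1}(a)$ is constant in $a$. That holds on your equilibrium path, but not after an earlier deviation. There, player $i$ is effectively the first mover of a P\&C game on $\{i,\dots,n\}$ with utility $u_i-p_{i-1}$, and she can secure $\max_a[\sum_{j\ge i}u_j(a)-p_{i-1}(a)]-\sum_{j>i}\mathrm{Avg}_j$, which is strictly above $\mathrm{Avg}_i$.

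Concretely, take $n=3$, $m=2$, $u_1=(0,0)$, $u_2=(1,0)$, $u_3=(0,0)$ (values at $a_1,a_2$), and let player~1 deviate to $p_1'=\mathbf{0}$.
\begin{itemize}
\item Your rule has player~2 propose $(-\tfrac12,\tfrac12)$ and earn $\tfrac12$.
\item Proposing $(-\eta,\eta)$ instead makes player~3 strictly prefer $a_1$ and gives player~2 $1-\eta>\tfrac12$.
\end{itemize}
Relatedly, your assertion that after any history every follower is indifferent fails for player~$n$. After $p_1'$, your intermediate players pass the distortion down the chain, and player~$n$ maximizes the non-constant function $\sum_{j\ge 2}u_j-p_1'$ (up to a constant).

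On the path your prescriptions coincide with a correct SPE, so only off-path behavior must change. Move the history dependence out of the proposals and into player~$n$'s tie-breaking:
\begin{itemize}
\item Every $i<n$ proposes $q_i(a)=\sum_{j>i}u_j(a)-\sum_{j>i}\mathrm{Avg}_j$ after every history; $q_1$ is your $p_1$.
\item At history $(p_1,\dots,p_{n-1})$, player~$n$ chooses $a$ to lexicographically maximize $(y_n(a),y_{n-1}(a),\dots,y_1(a))$. Here $y_k(a)=u_k(a)-p_{k-1}(a)+p_k(a)$ with $p_0=p_n\equiv 0$, and residual ties go to $a^\ast$.
\end{itemize}

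To verify this, suppose player~$i$ proposes an arbitrary $p_i$ while $i+1,\dots,n-1$ follow $q$. Then players $i+2,\dots,n$ are indifferent, so player~$n$ picks some $\hat a\in\arg\max_a[\sum_{j>i}u_j(a)-p_i(a)]$. That maximum is at least $\sum_{j>i}\mathrm{Avg}_j$, so player~$i$ gets at most $\max_a[\sum_{j\ge i}u_j(a)-p_{i-1}(a)]-\sum_{j>i}\mathrm{Avg}_j$. The proposal $q_i$ attains this bound: it makes all of $i+1,\dots,n$ indifferent, and the tie-breaking then favors~$i$. At the root, this selects an efficient alternative, and $a^\ast$ by the final tie-break. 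This lexicographic tie-breaking is exactly what resolves the attainment issue you flagged.

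Equivalently, you can prove existence by induction on $n$. Treat the subgame after $p_{i-1}$ as a P\&C game whose first mover has utility $u_i-p_{i-1}$, and pick, after each proposal, an SPE of that subgame selecting the desired maximizer.
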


In contrast to PA\&C, the P\&C mechanism exhibits a clear asymmetry among players.
Since no player can reject the proposed price vector, every player $i \geq 2$ is
rendered indifferent among all alternatives in equilibrium and obtains her average
payoff $\mathrm{Avg}_i$. Player~1 obtains the surplus $v(N) - \sum_{j \in N}
\mathrm{Avg}_j$ in addition to $\mathrm{Avg}_1$ in equilibrium, yielding her a
strict first-mover advantage. This advantage is formalized by the following result.

\begin{prop} \label{prop:first_mover}
The first mover's payoff in subgame perfect equilibrium is weakly higher under
P\&C than under PA\&C.
\end{prop}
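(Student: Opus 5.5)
We need $v(N) - \sum_{j\ge 2}\mathrm{Avg}_j \ge x_1^{\ast}$. By Theorem~\ref{prop:PAC_General}, the first mover's PA\&C payoff is $x_1^{\ast} = v(N) - v(\{2,\dots,n\})$. By Proposition~\ref{prop:PC_General}, her P\&C payoff is $v(N) - \sum_{j\ge 2}\mathrm{Avg}_j$. Since both expressions contain $v(N)$, the claim reduces to
\[
\sum_{j\ge 2}\mathrm{Avg}_j \le v(\{2,\dots,n\}) = \max_{a\in\mathcal{A}}\sum_{j\ge 2}u_j(a).
\]

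To prove this inequality, I would write the left side as an average over alternatives:
\[
\sum_{j\ge2}\mathrm{Avg}_j = \frac{1}{m}\sum_{k=1}^m \sum_{j\ge 2}u_j(a_k).
\]
An average of the numbers $\sum_{j\ge2}u_j(a_k)$, $k=1,\dots,m$, is at most their maximum, which is $v(\{2,\dots,n\})$ by Definition~\ref{def:v}. This gives the desired inequality, and hence the proposition.

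The only delicate point is matching the payoff from Theorem~\ref{prop:PAC_General} to the TU-game expression. The statement gives $x_1^{\ast}$ as $\sum_{j\ge1}u_j(a_1^{\ast}) - \sum_{j\ge2}u_j(a_2^{\ast})$ with $a_1^\ast\in\mathcal{A}^\ast_1$ and $a_2^\ast\in\mathcal{A}^\ast_2$, and this is exactly $v(N)-v(\{2,\dots,n\})$ by \eqref{eq:a_i_star}. I would also note that both payoffs are the same in every SPE, so the comparison is well defined. Finally, I would remark that equality holds exactly when $\sum_{j\ge2}u_j$ is constant across alternatives, which explains why the statement says ``weakly''.
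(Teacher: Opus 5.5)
Your proposal is correct and follows essentially the same route as the paper. Both cancel $v(N)=\sum_{j\in N}u_j(a_1^{\ast})$ and reduce the comparison to $\sum_{j\ge 2}\mathrm{Avg}_j \le \sum_{j\ge 2}u_j(a_2^{\ast}) = v(\{2,\dots,n\})$, which holds because an average over alternatives cannot exceed the maximum; your remark on when equality holds is also correct.
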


\begin{proof}
By Theorem~\ref{prop:PAC_General}, the first mover's SPE payoff under PA\&C is
\[
x_1^{\ast} = \sum_{j \in N} u_j(a_1^{\ast}) - \sum_{j \geq 2} u_j(a_2^{\ast}).
\]
By Proposition~\ref{prop:PC_General}, her SPE payoff under P\&C is $v(N) -
\sum_{j \geq 2} \mathrm{Avg}_j$. By definition of $a_1^{\ast}$ and $a_2^{\ast}$,
we have $\sum_{j \in N} u_j(a_1^{\ast}) = v(N)$ and $\sum_{j \geq 2} u_j(a_2^{\ast})
\geq \sum_{j \geq 2} \mathrm{Avg}_j$. The result follows.
\end{proof}

\subsubsection{Randomized Price and Choose}

One natural way to correct for this asymmetry is the \emph{randomized} P\&C
mechanism, in which the order of play is drawn from a uniform distribution so
that each player has an equal probability of being the first mover.

\begin{definition}[Randomized P\&C mechanism]
A permutation of $\{1, 2, \ldots, n\}$ is drawn uniformly at random by Nature.
The P\&C mechanism is then played according to the realized order.
\end{definition}

The expected payoffs of the randomized P\&C mechanism coincide with the Center
of the Imputation Set (CIS) value \citep{DriessenFunaki1991} of the TU game
$\tilde{v}$.

\begin{definition}[CIS value]
The Center of the Imputation Set value is defined by
\begin{equation}
    \mathrm{CIS}_i(v) = v(\{i\}) + \frac{1}{n}\left(v(N) - \sum_{j \in N} v(\{j\})\right),
    \qquad \forall i \in N.
\end{equation}
\end{definition}

By randomizing the order of play, each player is equally likely to be the first
mover, and in all other positions she is guaranteed her average payoff. This is
precisely the intuition behind the CIS value, which awards each player $i$ her
standalone worth $\mathrm{Avg}_i$ plus an equal share of the surplus
$v(N) - \sum_{j \in N} \mathrm{Avg}_j$.

\begin{thm}
The expected payoff of any SPE of the randomized P\&C mechanism coincides with the CIS value of the TU game $\tilde{v}$.
\end{thm}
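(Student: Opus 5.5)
The plan is to condition on the realized order and apply Proposition~\ref{prop:PC_General} to each one. Once Nature draws a permutation $\pi$, the continuation is exactly a P\&C mechanism among the players in the order $\pi$, and an SPE of the randomized mechanism induces an SPE in each such subgame. Proposition~\ref{prop:PC_General} holds for any labelling of the players, so the following applies whenever the realized first mover is $\pi(1)$. Every player $j \neq \pi(1)$ receives $\mathrm{Avg}_j$. The first mover receives $v(N) - \sum_{j \neq \pi(1)} \mathrm{Avg}_j$. These payoffs hold on every SPE path after $\pi$, whatever equilibrium is selected in the other subgames.

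Next we take expectations over the uniform permutation. Fix a player $i$. She is the first mover with probability $\frac{1}{n}$ and is not the first mover with probability $\frac{n-1}{n}$. Her expected payoff is therefore
\[
\frac{1}{n}\Bigl(v(N) - \sum_{j \neq i} \mathrm{Avg}_j\Bigr) + \frac{n-1}{n}\,\mathrm{Avg}_i .
\]
Writing $\frac{n-1}{n}\mathrm{Avg}_i = \mathrm{Avg}_i - \frac{1}{n}\mathrm{Avg}_i$ and folding the $-\frac{1}{n}\mathrm{Avg}_i$ term into the sum gives
\[
\mathrm{Avg}_i + \frac{1}{n}\Bigl(v(N) - \sum_{j \in N} \mathrm{Avg}_j\Bigr).
\]

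Finally we identify this expression with $\mathrm{CIS}_i(\tilde v)$. By Definition~\ref{def:v_tilde}, $\tilde v(\{j\}) = \mathrm{Avg}_j$ for every $j$. Since $n \ge 2$, the grand coalition has $|N| \ge 2$, so $\tilde v(N) = v(N)$. Substituting both into the CIS formula gives exactly the expression above, which proves the theorem.

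The only step needing care is the first one. We must check that Proposition~\ref{prop:PC_General} applies after every realized order, including orders reached with positive probability but with relabelled players. Since the proposition's payoffs are unique across SPEs, the expected payoff is pinned down regardless of equilibrium selection, and everything else is arithmetic.
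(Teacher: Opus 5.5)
Your proposal is correct and follows the paper's proof essentially step for step: condition on the realized order, apply Proposition~\ref{prop:PC_General} to get $v(N)-\sum_{j\neq i}\mathrm{Avg}_j$ when $i$ moves first and $\mathrm{Avg}_i$ otherwise, then average with first-mover probability $1/n$. Your extra remarks make explicit two points the paper uses tacitly: that an SPE of the randomized mechanism restricts to an SPE after each of Nature's draws, and that $\tilde v(N)=v(N)$ because $n\ge 2$.
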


\begin{proof}
The expected payoff of player $i$ is the average of her payoffs across all
possible orders. Fix an arbitrary order. By Proposition~\ref{prop:PC_General},
when player $i$ is the first mover her SPE payoff is
\begin{equation*}
    \sum_{j \in N} u_j(a^{\ast}) - \sum_{j \neq i} \mathrm{Avg}_j
    = \tilde{v}(N) - \sum_{j \neq i} \tilde{v}(\{j\}),
\end{equation*}
and when she is not the first mover her payoff is $\mathrm{Avg}_i = \tilde{v}(\{i\})$.
Since each player is the first mover with probability $1/n$, player $i$'s expected
payoff is
$$
    \frac{1}{n}\!\left(\tilde{v}(N) - \sum_{j \neq i} \tilde{v}(\{j\})\right)
    + \left(1 - \frac{1}{n}\right)\tilde{v}(\{i\})
    = \tilde{v}(\{i\}) + \frac{1}{n}\!\left(\tilde{v}(N) - \sum_{j \in N} \tilde{v}(\{j\})\right),
$$
which coincides with $\mathrm{CIS}_i(\tilde{v})$.
\end{proof}

\subsubsection{Bid, Price and Choose}

While randomizing the order of play ensures ex-ante equal division of the surplus,
the ex-post allocation remains unequal once the order is realized. To achieve ex-post
equality, \citet{Echenique2025}, building on \citet{PEREZCASTRILLO2001}, introduce a
preliminary bidding stage that determines the order of play.

All players simultaneously submit bids for the right to be the first mover. The
highest bidder obtains the first-mover position and pays her bid, which is
distributed equally among the remaining players. The P\&C mechanism is then played
with the winning bidder moving first and the remaining players ordered randomly.

The bidding stage internalizes the first-mover advantage, thereby enabling ex-post
equal division of the surplus. 
We show that in any subgame perfect equilibrium of the Bid-Price-and-Choose mechanism, equilibrium payoffs coincide with the CIS value of $\tilde{v}$, and this equality holds on every equilibrium path, not merely in expectation.

\begin{prop}
Consider the Bid-Price-and-Choose mechanism with $n$ players.
\begin{enumerate}[label=\roman*)]
    \item In every subgame perfect equilibrium, players' payoffs coincide with the
    CIS value of the TU game $\tilde{v}$.
    \item The mechanism subgame-perfect implements the set of efficient alternatives
    \citep{Echenique2025}.
\end{enumerate}
\end{prop}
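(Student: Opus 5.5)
The plan is to work by backward induction on the two stages, using Proposition~\ref{prop:PC_General} to evaluate every continuation after bidding. Fix bids $b=(b_1,\dots,b_n)$, and let $w$ be the winner, with ties broken by some fixed rule. The continuation is a P\&C game with $w$ first, so by Proposition~\ref{prop:PC_General} $w$ obtains $\tilde v(N)-\sum_{j\neq w}\tilde v(\{j\})$ and every $j\neq w$ obtains $\tilde v(\{j\})$. Because every SPE of every continuation P\&C game gives these same payoffs and selects an efficient alternative, the bidding stage reduces to a simultaneous game with the following payoffs. Write $S:=\tilde v(N)-\sum_{j}\tilde v(\{j\})\ge 0$, the surplus over standalone worths. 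Then the winner receives $\tilde v(\{w\})+S-b_w$, and each loser $j$ receives $\tilde v(\{j\})+b_w/(n-1)$. Efficiency, part (ii), then follows at once from Proposition~\ref{prop:PC_General}, together with the selection argument of \citet{Echenique2025} applied to the continuation game.

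Next, normalize payoffs by subtracting $\tilde v(\{i\})$. This leaves a symmetric all-pay-free first-price auction for a prize worth $S$, in which the winner pays $b_w$ and each loser gets $b_w/(n-1)$. I will show that in every equilibrium of this game each player's normalized payoff equals $S/n$, which is exactly $\mathrm{CIS}_i(\tilde v)-\tilde v(\{i\})$. I will follow the Pérez-Castrillo--Wettstein argument in three steps.

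First, the highest bid must equal $b^\ast:=(n-1)S/n$. If $b_w<b^\ast$, the winner gets $S-b_w>b_w/(n-1)$, and any loser can overbid slightly, gaining $S-b_w-\epsilon$ instead of $b_w/(n-1)$. The inequality $S-b_w> b_w/(n-1)$ is equivalent to $b_w<(n-1)S/n$, so this deviation is profitable. If $b_w>b^\ast$, the winner gets $S-b_w<b_w/(n-1)$. She would then prefer to lower her bid below the second-highest bid and collect a loser's share. This needs care, because lowering her bid changes who wins and what price is paid.

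Second, at least two players must submit the maximal bid. Otherwise the unique top bidder could lower her bid slightly and still win, paying less.

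Third, with the maximal bid equal to $b^\ast$, the winner gets $S-b^\ast=S/n$ and each loser gets $b^\ast/(n-1)=S/n$. Adding back $\tilde v(\{i\})$ gives the CIS value on every equilibrium path. Existence follows from the profile in which all players bid $b^\ast$, which I will check directly.

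I expect the main obstacle to be the case $b_w>b^\ast$. There the winner's deviation to a low bid makes the second-highest bidder $b_{(2)}$ the new winner, and the original winner then receives $b_{(2)}/(n-1)$ rather than $b_w/(n-1)$. To handle this, I will first use the second step to show that the top two bids are equal. It then suffices that losing at the price $b_{(2)}=b_w$ yields $b_w/(n-1)>S-b_w$, and this contradicts equilibrium. Ties under the random tie-breaking rule will be handled by taking expectations, which is harmless because the winner's and losers' normalized payoffs coincide at $b^\ast$.
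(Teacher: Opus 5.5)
Your proposal is correct and follows essentially the same route as the paper. You use Proposition~\ref{prop:PC_General} to reduce the bidding stage to a first-price bid for the surplus $S$, with the winning bid shared among the losers. You show that at least two players submit the top bid, and you pin that bid at $b^\ast=\frac{n-1}{n}S$ through the indifference condition $S-b=b/(n-1)$, which yields the CIS value of $\tilde v$. Your explicit deviations for $b_w<b^\ast$ and $b_w>b^\ast$ simply justify the winner's indifference between winning and losing, which the paper asserts directly.
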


\begin{proof}
Part~(ii) is established in \citet{Echenique2025}. We prove part~(i).

Let $W := \arg\max_{i \in N} b_i$ denote the set of highest bidders. In any subgame
perfect equilibrium, $|W| \geq 2$: if there were a unique highest bidder, she could
profitably deviate by slightly reducing her bid while remaining the unique winner.
Hence any winner $i \in W$ must be indifferent between winning and not winning the
bidding stage:
\begin{equation*}
    \mathrm{Avg}_i + \sum_{j \in N} u_j(a^{\ast}) - \sum_{j \in N} \mathrm{Avg}_j - b
    =
    \mathrm{Avg}_i + \frac{b}{n-1}.
\end{equation*}
Solving for $b$ yields the equilibrium bid
\begin{equation*}
    b^{\ast} = \frac{n-1}{n}\!\left(\sum_{j \in N} u_j(a^{\ast}) - \sum_{j \in N} \mathrm{Avg}_j\right).
\end{equation*}
Substituting $b^{\ast}$ into the payoff expression, each player $i$ receives
\begin{equation*}
    \mathrm{Avg}_i + \frac{1}{n}\!\left(\sum_{j \in N} u_j(a^{\ast}) - \sum_{j \in N} \mathrm{Avg}_j\right),
\end{equation*}
which coincides with $\mathrm{CIS}_i(\tilde{v})$.
\end{proof}

\subsection{Price, Accept and Choose}

\subsubsection{Randomized Price, Accept and Choose}

The PA\&C mechanism gives every player the opportunity to accept or reject the previously proposed prices, thereby conferring symmetric bargaining power in a structural sense. 
In particular, any player $i \geq 2$ can reject the proposed prices and initiate a new coalition starting from herself. 
However, the gain from rejection depends on the realized order of play, which may confer an intrinsic advantage on some players relative to others. 
As in the P\&C mechanism, this asymmetry can be mitigated by randomizing the order of play prior to implementing the PA\&C mechanism.

\begin{definition}[Randomized PA\&C mechanism]
A permutation of $\{1, 2, \ldots, n\}$ is drawn uniformly at random by Nature.
The PA\&C mechanism is then played according to the realized order.
\end{definition}

We show that the expected payoffs under the randomized PA\&C mechanism coincide
with the Shapley value of the game $v$ defined in Definition~\ref{def:v}.

\begin{definition}[Shapley value: \citealp{Shapley1953}]
The Shapley value of a TU game $(N,v)$ is given, for each $i \in N$, by
\begin{equation*}
\phi_i(v)
=
\sum_{S \subseteq N \setminus \{i\}}
\frac{|S|!\,(n - |S| - 1)!}{n!}
\bigl(v(S \cup \{i\}) - v(S)\bigr).
\end{equation*}
\end{definition}

The Shapley value assigns to each player the average of her marginal contributions across all possible orderings of players.
In the PA\&C mechanism, each player effectively initiates a new coalition with all subsequent players by rejecting the preceding prices. 
By Theorem~\ref{prop:PAC_General}, the payoff of player $i$ in any SPE corresponds to her marginal contribution $x_i^{\ast}$, which captures the increment in total welfare from selecting the efficient alternative for the coalition $\{i, i+1, \dots, n\}$ relative to the coalition $\{i+1, \dots, n\}$.
From the perspective of the TU game $v$, this is precisely the marginal contribution of player $i$ to the coalition of players $j \geq i$.

By uniformly randomizing the order of players, the expected payoff of each player under the randomized PA\&C mechanism coincides with the Shapley value of the TU game $v$.

\begin{thm} \label{prop:random_PAC}
The expected payoff of any subgame perfect equilibrium of the randomized PA\&C mechanism coincides with the Shapley value of the TU game $v$.
\end{thm}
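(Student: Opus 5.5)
The proof conditions on the realized order and then averages over orders. Fix a permutation $\pi$ of $N$, and let $P_i^\pi$ be the set of players who move after $i$ in $\pi$. Once Nature has drawn $\pi$, the continuation game is exactly the PA\&C mechanism with the players relabeled according to $\pi$. Every subgame perfect equilibrium of the randomized mechanism therefore induces a subgame perfect equilibrium of this PA\&C game. By Theorem~\ref{prop:PAC_General}, player $i$'s equilibrium payoff in that subgame equals her marginal contribution in the relabeled order. In particular it does not depend on which SPE is played, nor on which efficient alternatives are selected.

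The next step expresses this marginal contribution in terms of the TU game $v$ of Definition~\ref{def:v}. By the definition of $\mathcal{A}_i^{\ast}$ in the relabeled order, $\sum_{j\in P_i^\pi\cup\{i\}}u_j(a_i^{\ast})=v(P_i^\pi\cup\{i\})$ and $\sum_{j\in P_i^\pi}u_j(a_{i+1}^{\ast})=v(P_i^\pi)$. Hence
\[
x_i^{\ast,\pi}=v(P_i^\pi\cup\{i\})-v(P_i^\pi).
\]
When $i$ is last in $\pi$, we have $P_i^\pi=\emptyset$. The payoff is then $\max_a u_i(a)=v(\{i\})-v(\emptyset)$, using the convention $v(\emptyset)=0$ implied by \eqref{eq:def_v}, where an empty sum is zero. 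This case is the one place where the last-player branch of the definition of $x^{\ast}$ has to be matched explicitly.

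Finally, I take expectations over the uniform permutation. Player $i$'s expected payoff is
\[
\frac{1}{n!}\sum_{\pi}\bigl(v(P_i^\pi\cup\{i\})-v(P_i^\pi)\bigr).
\]
For each $S\subseteq N\setminus\{i\}$, we count the permutations with $P_i^\pi=S$. The players of $S$ must follow $i$ in any order, and the remaining $n-|S|-1$ players must precede $i$ in any order. This gives $|S|!\,(n-|S|-1)!$ permutations. Grouping the sum by $S$ therefore yields exactly $\phi_i(v)$.

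The only point needing care is that the predecessor-based definition of the Shapley value equals the successor-based sum above. This is handled directly by the counting just described, or equivalently by the symmetry of reversing permutations. I do not anticipate any genuine obstacle, because the substantive work is already contained in Theorem~\ref{prop:PAC_General}.
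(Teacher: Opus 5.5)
Your proposal is correct and follows the paper's proof essentially verbatim. Like the paper, you condition on the realized order, invoke Theorem~\ref{prop:PAC_General} to identify player $i$'s payoff as $v(S\cup\{i\})-v(S)$ for her follower set $S$, and count the $|S|!\,(n-|S|-1)!$ orders with that follower set; your remarks that subgame perfection restricts to each post-Nature subgame and that $v(\emptyset)=0$ for the last mover just make explicit details the paper leaves implicit.
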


\begin{proof}
We extend the definition of the best alternative to any subset of players. For $S \subseteq N$, let
\begin{equation} \label{eq:a_S}
    \mathcal{A}_S^{\ast} = \arg\max_{a \in \mathcal{A}} \sum_{j \in S} u_j(a).
\end{equation}

By Theorem~\ref{prop:PAC_General}, in any SPE the payoff of each player $i$ coincides with $x_i^{\ast}$. Fix an order $\pi$ drawn by Nature, and let $S \subseteq N \setminus \{i\}$ be the set of players who move after $i$ (player $i$'s \emph{followers}). Applying Theorem~\ref{prop:PAC_General} (i) to the order $\pi$, player $i$'s SPE payoff is
$v(S \cup \{i\}) - v(S)$.

For any $i$ and $S \subseteq N \setminus \{i\}$, there are $|S|!$ orderings of the players in $S$ and $(n - |S| - 1)!$ orderings of the players in $N \setminus (S \cup \{i\})$, giving $|S|!\,(n - |S| - 1)!$ orders for which $i$'s follower set is exactly $S$. Each order realizes with probability $1/n!$, so player $i$'s expected payoff is
\begin{equation*}
    \sum_{S \subseteq N \setminus \{i\}} \frac{|S|!\,(n - |S| - 1)!}{n!} \bigl(v(S \cup \{i\}) - v(S)\bigr),
\end{equation*}
which coincides with $\phi_i(v)$.
\end{proof}

\subsubsection{Bid, Price, Accept and Choose}

Similarly to the P\&C case, randomizing the order of play in the PA\&C mechanism delivers an ex-ante fair division of the surplus by averaging each player's marginal contribution, but the ex-post allocation remains dependent on the realized order. 
To address this, we introduce a bidding procedure that endogenizes the order of play and ensures an order-independent ex-post allocation, in analogy with the Bid-P\&C mechanism.

The Bid-Price-Accept-and-Choose mechanism departs significantly from Bid-P\&C in its incentive structure. In the P\&C mechanism, only the first mover holds a strategic advantage and subsequent positions are inconsequential. By contrast, every position in the PA\&C mechanism affects a player's capacity to extract surplus. 
Our bidding procedure addresses this positional asymmetry by compensating players for the first-mover position, following the approach of \citet{PEREZCASTRILLO2001}.

We first describe the two-player case and then extend to $n$ players.

\paragraph{Two-Player Case}

The bidding procedure compensates players for being selected as the first mover. By Proposition~\ref{prop:SPE_PAC_2}, in any SPE of the PA\&C mechanism each player receives her marginal contribution. Hence, if player $i$ is the first mover, she obtains $u_1(a^{\ast}) + u_2(a^{\ast}) - u_j(a_j^{\ast})$, since she must guarantee player $j$ at least her standalone payoff. Here, $a^{\ast}$ denotes the efficient alternative selected in equilibrium, and $a_j^{\ast}$ (resp. $a_i^{\ast}$) denotes the standalone efficient alternative for player $j$ (resp. player $i$). If instead player $i$ is the second mover, her payoff is $u_i(a_i^{\ast})$. Being the second mover is thus advantageous, as
\begin{equation*}
u_1(a^{\ast}) + u_2(a^{\ast}) \leq u_1(a_1^{\ast}) + u_2(a_2^{\ast}).
\end{equation*}
Then, by Definition~\ref{def:v}, the game $v$ is subadditive: $v(N) \le v(\{1\}) + v(\{2\})$.
We leverage this property to design a bidding game that compensates players for not being the second mover. The procedure is as follows:

\begin{itemize}
\item[] \textbf{First Stage (bid for the first-mover position):} Each player $i$ simultaneously submits a bid $b^j_i$, representing the compensation she demands from player $j$ for being selected as the first mover. The player with the lowest net bid $B_i$, defined as $B_i = b^j_i - b^i_j$, becomes the first mover. In case of a tie, the winner is selected randomly. The first mover receives the requested compensation $b^j_i$ paid by player $j$.
\item[] \textbf{Second Stage:} The PA\&C mechanism is then played in the order determined in the first stage.
\end{itemize}

The net bid $B_i$ captures each player's willingness to avoid being the first mover: it reflects how much more she demands to move first relative to her opponent.\footnote{This interpretation of the net bid as compensation follows \citet{PEREZCASTRILLO2001}, who study weakly superadditive games.} Unlike \citet{PEREZCASTRILLO2001}, who analyze the division of a fixed, exogenously given surplus, our setting is novel in that players bid to determine their positions in a dynamic mechanism that endogenously generates the surplus through their strategic choices. Designing such a bidding procedure for PA\&C is not straightforward: unlike in P\&C, where only the first mover's position matters, every position in PA\&C influences payoffs. Our construction ensures that positional rents are properly compensated while preserving efficiency.

We now show that in the Bid-PA\&C mechanism with two players, players' payoffs coincide with the standard solution, hence with the Shapley value of the TU game $v$.

\begin{prop} \label{prop:Bid_2}
Consider the Bid-PA\&C mechanism with two players. Then in any SPE, the payoffs of the players coincide with the Shapley value of the TU game $v$.
\end{prop}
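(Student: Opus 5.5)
Let $v(N)=\max_a(u_1(a)+u_2(a))$ and $v(\{i\})=u_i(a_i^{\ast})$. The plan is backward induction: first fix the continuation payoffs from the second stage, then solve the bidding stage as a two-player auction. By Proposition~\ref{prop:SPE_PAC_2}, once the order is fixed every SPE continuation gives the first mover $v(N)-v(\{j\})$ and the second mover $v(\{j\})$, where $j$ is the second mover. So the bidding stage reduces to a simultaneous game with known payoffs. If $i$ wins with opponent $j$, the payoffs are
\[
i:\ v(N)-v(\{j\})+b_i^j,\qquad j:\ v(\{j\})-b_i^j.
\]
The target is $\phi_i(v)=\tfrac12 v(\{i\})+\tfrac12\bigl(v(N)-v(\{j\})\bigr)$. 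Write $D:=v(\{1\})+v(\{2\})-v(N)\ge 0$ for the subadditivity gap noted above. Then $\phi_i(v)=v(N)-v(\{j\})+D/2$, which is the first-mover payoff plus $D/2$, and also $\phi_j(v)=v(\{j\})-D/2$. So in the winner event, the Shapley value is reached exactly when the winner is paid $D/2$.

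Since $B_1=-B_2$, the first mover is player 1 if $B_1<0$, player 2 if $B_1>0$, and is chosen randomly if $B_1=0$. The next step is to compare payoffs across cases. Player $i$'s payoff is $v(N)-v(\{j\})+b_i^j$ if she wins and $v(\{i\})-b_j^i$ if she loses. She weakly prefers winning iff $b_i^j+b_j^i\ge D$. This condition is symmetric in the two players, so both prefer winning or both prefer losing.

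The main step, and the main obstacle, is to show that any SPE must have $B_1=0$ together with $b_1^2=b_2^1=D/2$.

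First I would rule out a strict winner, say $B_1<0$ so that player 1 wins. Player 1 could raise $b_1^2$ slightly and still win, which gains her money. Hence she must be unable to raise it, meaning any increase flips the outcome, so $B_1=0$. Also the loser, player 2, must not want to become the winner. She can lower $b_2^1$ to make $B_1>0$, which would give her $v(N)-v(\{1\})+b_2^1$ with her bid lowered arbitrarily close to the original $b_2^1$. She therefore needs $v(\{2\})-b_1^2\ge v(N)-v(\{1\})+b_2^1$, which is $b_1^2+b_2^1\le D$. Meanwhile player 1 raising her bid is always profitable whenever she keeps winning. I need to handle these open-set deviation issues carefully, which is why I expect this step to be the delicate part.

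Now take the tie case $B_1=0$, so $b_1^2=b_2^1=:b$. Each player gets the average of her winning payoff and her losing payoff. Player 1 can perturb $b_1^2$ slightly either way to secure either outcome for sure, at an arbitrarily small change in bid. So she must be indifferent between the two outcomes, giving $v(N)-v(\{2\})+b=v(\{1\})-b$, that is, $b=D/2$. The same condition holds for player 2.

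Substituting $b=D/2$, each player's payoff is exactly $\phi_i(v)$ whichever player wins. The tie-breaking lottery therefore does not matter, and the payoffs coincide with the Shapley value on every path. To finish, I would check that this profile is indeed an equilibrium (existence), and I would make sure strict-winner profiles are excluded by the upward-deviation argument above.
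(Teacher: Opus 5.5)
Your proposal is correct and follows essentially the same route as the paper. You use Proposition~\ref{prop:SPE_PAC_2} to reduce the problem to the bidding stage, show that the net bids must tie, and use indifference between moving first and second to pin down $b_1^2=b_2^1=D/2$, which gives $\phi_i(v)$ in either role. Your upward-deviation argument ruling out a strict winner makes explicit a step the paper's uniqueness argument only asserts; the loser's condition $b_1^2+b_2^1\le D$ that you derive along the way is not needed.
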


\begin{proof}
We proceed in three steps: we construct a symmetric bid $b$ that implements the Shapley value (Step 1), show that $b$ constitutes an equilibrium bid (Step 2), and establish that it is the unique equilibrium bid (Step 3).

\textbf{Step 1: Constructing a symmetric bid.} 
Let $b_i^j$ denote the compensation bid that player $i$ requests from player $j (\neq i)$ in exchange for being selected as the first mover.
Suppose each player $i$ submits
\begin{equation*}
b_i^j = u_j(a^{\ast}_j) - \frac{1}{2}\big(u_1(a^{\ast}) + u_2(a^{\ast}) - u_i(a^{\ast}_i) + u_j(a^{\ast}_j)\big).
\end{equation*}
Then $b_i^j = b_j^i = b$, where
\begin{equation*}
b = \frac{1}{2}\big(u_1(a^{\ast}_1) + u_2(a^{\ast}_2) - u_1(a^{\ast}) - u_2(a^{\ast})\big),
\end{equation*}
so the net bid is zero for both players. By Proposition~\ref{prop:SPE_PAC_2}, in any SPE of the PA\&C mechanism each player's payoff coincides with her marginal contribution. If player $i$ is the first mover, her payoff is
\begin{equation*}
u_1(a^{\ast}) + u_2(a^{\ast}) - u_j(a^{\ast}_j) + b = \frac{1}{2}\big(u_1(a^{\ast}) + u_2(a^{\ast}) - u_j(a_j^{\ast}) + u_i(a_i^{\ast})\big) = \phi_i(v),
\end{equation*}
and if instead she is the second mover, her payoff is
\begin{equation*}
u_i(a^{\ast}_i) - b = \frac{1}{2}\big(u_1(a^{\ast}) + u_2(a^{\ast}) - u_j(a_j^{\ast}) + u_i(a_i^{\ast})\big) = \phi_i(v).
\end{equation*}
Both roles thus yield the same payoff $\phi_i(v)$.

\textbf{Step 2: $b$ is an equilibrium bid.} If player $i$ raises her bid above $b$, she becomes the second mover and still receives $\phi_i(v)$. If she lowers her bid below $b$, she becomes the first mover but receives less than $\phi_i(v)$. Hence no player has a profitable deviation, and $b$ is an equilibrium bid.

\textbf{Step 3: Uniqueness.} In equilibrium, each player must be indifferent between being the first mover or not, and both players must submit equal bids to keep the net bid at zero; otherwise some player would have a profitable deviation. Hence $b_i^j = b_j^i = b'$ in equilibrium, and indifference requires
\begin{equation*}
u_1(a^{\ast}) + u_2(a^{\ast}) - u_j(a^{\ast}_j) + b' = u_i(a^{\ast}_i) - b'.
\end{equation*}
Solving yields $b' = b$, so $b$ is the unique equilibrium bid.
\end{proof}

\paragraph{Many-Player Case}
For $n$ players, we extend the bidding procedure sequentially. The first stage determines the order of play through a sequence of bidding rounds analogous to the two-player case; the second stage then implements the PA\&C mechanism according to this order.
\begin{itemize}
\item[] \textbf{First Stage:} The first stage consists of $n-1$ rounds that determine the order in which players move.
\begin{itemize}
\item[i)] \textit{First mover.} Each player $i$ simultaneously submits a bid vector $b_i = (b_i^j)_{j \neq i}$, where $b_i^j$ represents the compensation player $i$ requests if selected as the first mover, to be paid by player $j$. The player with the lowest net bid $B_i = \sum_{j \neq i} b_i^j - \sum_{j \neq i} b_j^i$ becomes the first mover. In case of a tie, the winner is selected randomly. The first mover receives total compensation $\sum_{j \neq i} b_i^j$, paid by the remaining players.
\item[ii)] \textit{Subsequent positions.} The remaining players repeat this procedure to determine the second mover, and so on, until all positions are assigned.
\end{itemize}
\item[] \textbf{Second Stage:} The PA\&C mechanism is then played in the order determined in the first stage.
\end{itemize}

The intuition mirrors the two-player case. Players bid sequentially from the first to the last position; the player ultimately selected as the last mover pays compensation to the others but is then entitled to $u_i(a_i^{\ast})$ when playing the subsequent PA\&C mechanism. The earlier rounds of bidding compensate players for moving earlier in the order, since the game $v$ is subadditive.\footnote{For any disjoint $S,T \subseteq N$, we have 
$\max_{a \in \mathcal{A}} \sum_{i \in S \cup T} u_i(a) = \max_{a \in \mathcal{A}} \left( \sum_{i \in S} u_i(a) + \sum_{i \in T} u_i(a) \right) \le \max_{a \in \mathcal{A}} \sum_{i \in S} u_i(a) + \max_{a \in \mathcal{A}} \sum_{i \in T} u_i(a)$.
} 
By leveraging this bidding design, we show that in any SPE, the resulting payoffs coincide with the Shapley value of $v$.

\begin{thm}
Consider the Bid-PA\&C mechanism with $n$ players. In any subgame perfect equilibrium, the payoffs coincide with the Shapley value of the TU game $v$.
\end{thm}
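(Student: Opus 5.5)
Following the approach of \citet{PEREZCASTRILLO2001}, I will argue by induction on the number of players $n$, with Proposition~\ref{prop:Bid_2} as the base case. The key preliminary observation is the following. Once the first round of bidding selects a first mover $i$, the remaining stage is exactly the Bid-PA\&C mechanism on $N\setminus\{i\}$, followed by the PA\&C game on the full order. By Theorem~\ref{prop:PAC_General}, player $i$'s payoff in the PA\&C stage is her marginal contribution to the coalition of all followers, namely $v(N)-v(N\setminus\{i\})$. In addition, the continuation payoffs of the other players are determined by the bidding among them. Since $i$ has already been placed first, the marginal contributions of the players in $N\setminus\{i\}$ depend only on their own relative order and are computed within $v|_{N\setminus\{i\}}$. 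By the induction hypothesis, every SPE of this subgame therefore gives each $j\neq i$ the payoff $\phi_j(v|_{N\setminus\{i\}})$, shifted by the first-round transfer $-b_i^j$. So after any first-round bid profile with winner $i$, payoffs are pinned down: $i$ receives $v(N)-v(N\setminus\{i\})+\sum_{j\neq i} b_i^j$, and each $j\neq i$ receives $\phi_j(N\setminus\{i\})-b_i^j$.

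Next, I will characterize the first-round equilibrium bids by reproducing the Pérez-Castrillo--Wettstein argument in the following order. (a) In any SPE, every player is indifferent about who wins the first round; otherwise the winner, or some player, could profitably adjust bids. (b) In any SPE, $B_i=0$ for all $i$. Since $\sum_i B_i=0$, if some $B_i<0$ were strictly lowest, that player could raise a bid slightly and keep winning, or others could exploit the gap. (c) Combining (a) and (b) with the linear payoff expressions above shows that each player $i$'s payoff equals
\[
\frac1n\Bigl(v(N)-v(N\setminus\{i\})\Bigr)+\frac1n\sum_{j\neq i}\phi_i(N\setminus\{j\}).
\]
(d) By the standard recursive formula for the Shapley value, this equals $\phi_i(v)$. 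Conversely, I will exhibit equilibrium bids $b_i^j=\phi_j(N\setminus\{i\})-\phi_j(v)$ and verify that they satisfy $B_i=0$ and yield $\phi$. This establishes existence, though only the payoff characterization is required by the statement.

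The main obstacle is step (a)/(b): establishing rigorously, as in \citet{PEREZCASTRILLO2001}, that every player is indifferent and that net bids are all zero. The argument is that if some player $k$ strictly preferred a different winner, $k$ could perturb the bids $b_k^j$ to change the identity of the winner or the transfers profitably. Here the perturbation is to raise $b_k^i$ toward the winner while lowering the others, which keeps $B_k$ fixed, and this must be summed carefully. Subadditivity of $v$ (footnoted above) is not needed for this logic; their superadditivity assumption is used only for nonnegative compensations. I will therefore check that the indifference-and-zero-net-bid lemma goes through using only the linearity of payoffs in bids, which holds here as well. Once payoffs satisfy $y_i=\text{(payoff if $j$ wins)}$ for every $j$ and $\sum_i y_i=v(N)$, summing over possible winners produces the averaged expression in (c).
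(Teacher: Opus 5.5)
Your overall architecture is the paper's. You use induction with continuation payoffs pinned down as $u_i^i=v(N)-v(N\setminus\{i\})+\sum_{j\neq i}b_i^j$ and $u_i^j=\phi_i(N\setminus\{j\},v)-b_j^i$. You then establish zero net bids and indifference, sum over winners with the recursive formula, and get existence from the balanced-contributions bids. Your explanation of why the induction hypothesis applies even though the first mover still plays in the PA\&C stage is more explicit than the paper's. The gap is in the step you flag yourself, and it is a missing idea, not only missing rigor.

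Your argument for (b) only rules out a \emph{unique} lowest net bidder. Suppose $W=\arg\min_j B_j$ satisfies $2\le |W|<n$. Raising any single bid $b_i^j$ pushes $B_i$ strictly above some other winner's net bid, so $i$ leaves $W$. The perturbation you propose keeps $B_k$ fixed, and since $k$ moves only her own bids, this also keeps her compensation $\sum_{j\neq k}b_k^j$ fixed. Such a move can only reshuffle who wins; it never raises a player's payoff conditional on a given winner.

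What is needed is the deviation in the paper's Step~3:
\begin{itemize}
\item A winner $i\in W$ lowers $b_i^j$ by $\varepsilon$ for every $j\in W\setminus\{i\}$ and raises $b_i^k$ by $|W|\varepsilon$ for some $k\notin W$.
\item The net bid of every member of $W$, including $i$, then rises by exactly $\varepsilon$, while $B_k$ falls by $|W|\varepsilon$. For small $\varepsilon$ the winner set is unchanged.
\item $i$'s compensation when she is selected rises by $\varepsilon$. Her payoff when another $j\in W$ is selected, $\phi_i(N\setminus\{j\},v)-b_j^i$, depends only on $j$'s bid and is unchanged.
\end{itemize}
This strictly profitable deviation forces $W=N$, and then $\sum_i B_i=0$ gives $B_i=0$ for all $i$.

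This also fixes the order of your steps. Before $W=N$ is known, indifference can only be derived among the members of $W$, since only they are selected with positive probability. For $j\notin W$, nothing forces $u_k^j$ to equal $y_k$. The summation in (c) runs over all $n$ potential first movers, so full indifference must come after the zero-net-bid result, as in the paper's Steps~3--4.

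Finally, in your closing sentence the ingredient that turns the summation over winners into the expression in (c) is $B_i=0$ for each $i$. The identity $\sum_i y_i=v(N)$ holds automatically in any SPE and carries no information about individual net bids.
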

\begin{proof}
We proceed by induction on $n$. The base case $n = 2$ follows from Proposition~\ref{prop:Bid_2}. Assume the claim holds for all games with at most $n-1$ players, where $n \geq 3$. We prove it for $n$ players.

Before presenting the argument, we recall the following recursive formulation of the Shapley value:
\begin{equation} \label{eq:recursive_Sh}
\phi_i(N,v) = \frac{1}{n}\big(v(N) - v(N \setminus \{i\})\big) + \frac{1}{n}\sum_{j \neq i} \phi_i(N \setminus \{j\}, v).
\end{equation}
where, with a slight abuse of notation, $(N \setminus \{j\}, v)$ denotes the game with player set $N \setminus \{j\}$ and characteristic function $v$ restricted to subsets of $N \setminus \{j\}$.
The first term on the right-hand side represents player $i$'s marginal contribution when she is the last to join $N$, while the second term averages over the cases where some other player $j$ is the last to join. Each case occurs with equal probability $1/n$, reflecting the uniform distribution over orderings.

We establish the claim in four steps. We construct a bid vector for each player $i$ that implements the Shapley value of $v$ and satisfies $B_i = 0$ (Step 1), show that this bid constitutes an equilibrium of the bidding game (Step 2), prove that in any equilibrium the net bid of every player is zero (Step 3), and conclude that every equilibrium yields the Shapley value (Step 4).

\textbf{Step 1: Constructing bid vectors that implement the Shapley value.} For each $i$ and $j \neq i$, let
\begin{equation*}
b_i^j = \phi_j\big(N \setminus \{i\}, v\big) - \phi_j(N, v).
\end{equation*}
The net bid of player $i$ is
\begin{equation*}
B_i = \sum_{j \neq i}\big(\phi_j(N \setminus \{i\}, v) - \phi_j(N, v)\big) - \sum_{j \neq i}\big(\phi_i(N \setminus \{j\}, v) - \phi_i(N, v)\big).
\end{equation*}
By the balanced contributions property of the Shapley value \citep{Myerson_IJGT},
\begin{equation*}
\phi_j(N \setminus \{i\}, v) - \phi_j(N, v) = \phi_i(N \setminus \{j\}, v) - \phi_i(N, v) \qquad \text{for all } i, j,
\end{equation*}
which implies $B_i = 0$ for all $i$.

Suppose these bids constitute an equilibrium and let $k \in N$ be the randomly chosen first mover. By Theorem~\ref{prop:PAC_General}, each player's SPE payoff in the PA\&C mechanism coincides with her marginal contribution. Since the bids compensate players exactly for positional rents, each player obtains her Shapley value regardless of the realized order. Specifically, if player $i$ is not the first mover, her payoff is
\begin{equation*}
\phi_i\big(N \setminus \{k\}, v\big) - \big(\phi_i\big(N \setminus \{k\}, v\big) - \phi_i(N, v)\big) = \phi_i(N, v).
\end{equation*}
If instead player $i$ is the first mover, her payoff is
\begin{equation*}
v(N) - v\big(N \setminus \{i\}\big) + \sum_{j \neq i}\big(\phi_i\big(N \setminus \{j\}, v\big) - \phi_i(N, v)\big) = n\phi_i(N, v) - (n-1)\phi_i(N, v) = \phi_i(N, v),
\end{equation*}
where the equality follows from the recursive formula~\eqref{eq:recursive_Sh}. Thus the proposed bids implement the Shapley value for every player.

\textbf{Step 2: The bid vectors constitute an equilibrium.} By Step 1, with the above bids each player earns her Shapley value regardless of whether she is selected as first mover. Consider a unilateral deviation by player $i$. If player $i$ raises her bids, she is no longer selected as first mover, and her payoff remains equal to her Shapley value. If instead she lowers her bids, she is selected as first mover but receives less compensation, yielding a payoff strictly below her Shapley value. Since no deviation yields a strictly higher payoff, the proposed bid vectors form an equilibrium.

\textbf{Step 3: In any equilibrium, $B_i = 0$ for all $i \in N$.} Let $W = \arg\min_{j \in N} B_j$ be the set of players with the lowest net bid. We show that $W = N$ in equilibrium. Suppose not, so $|W| < |N|$. Take any $i \in W$ and $k \notin W$, and consider the following deviation by $i$: she lowers her bid to each $j \in W \setminus \{i\}$ by $\varepsilon > 0$ and raises her bid to player $k$ by $|W|\varepsilon$. Formally:
\begin{equation*}
\begin{cases}
b_i^{\prime j} = b_i^j - \varepsilon & \text{for all } j \in W \setminus \{i\}, \\
b_i^{\prime k} = b_i^k + |W|\varepsilon & \text{for } k \notin W, \\
b_i^{\prime l} = b_i^l & \text{for all } l \in N \setminus (W \cup \{k\}).
\end{cases}
\end{equation*}
For $\varepsilon$ small enough, the set of winners remains unchanged: the net bids of all $j \in W \setminus \{i\}$ increase by $\varepsilon$, as does the net bid of $i$, while the net bid of $k$ decreases by $|W|\varepsilon$. Since $i$ remains among the winners but earns strictly more if selected, while her payoff if not selected is unchanged, this is a profitable deviation, a contradiction. Hence $W = N$.

Since $W = N$, all players share the same net bid: $B_i = B_j$ for all $i, j \in N$. Moreover,
\begin{equation*}
\sum_{i \in N} B_i = \sum_{i \in N}\left(\sum_{j \neq i} b_i^j - \sum_{j \neq i} b_j^i\right) = 0,
\end{equation*}
since each $b_i^j$ appears once positively and once negatively. As all $B_i$ are equal and sum to zero, $B_i = 0$ for all $i \in N$.

\textbf{Step 4: In any equilibrium, each player receives her Shapley value.} Since all $B_i = 0$ by Step 3, each player must be indifferent among all possible winners. If player $i$ strictly preferred to win, she could lower her bids slightly to become the unique winner and improve her payoff; if she strictly preferred some player $j$ to win, she could slightly increase her bid $b_i^j$ to that effect. Neither occurs in equilibrium, so every player is indifferent among winners.

Let $u_i^j$ denote the payoff of player $i$ when player $j$ is selected as the first mover. When player $i$ is the first mover, her payoff is
\[
u_i^i = v(N) - v(N \setminus \{i\}) + \sum_{j \neq i} b_i^j.
\]
When player $j \neq i$ is the first mover, player $i$'s payoff is
\[
u_i^j = \phi_i(N \setminus \{j\}, v) - b_j^i.
\]
Summing over all possible winners:
\begin{equation*}
v(N) - v(N \setminus \{i\}) + \sum_{j \neq i} b_i^j + \sum_{j \neq i}\big(\phi_i(N \setminus \{j\}, v) - b_j^i\big) = n\phi_i(N, v) + B_i = n\phi_i(N, v),
\end{equation*}
where the last equality uses $B_i = 0$. Since player $i$ is indifferent among all winners, $u_i^j = u_i^k$ for all $j, k \in N$, and therefore $u_i^j = \phi_i(N, v)$ for every $j$. Hence every player receives her Shapley value in equilibrium.
\end{proof}

\section{Other Mechanisms and TU-Game Solutions}\label{sec:Further}

We now examine an additional design to show that the mapping between mechanisms and TU-game solutions extends beyond P\&C and PA\&C. 
We also consider the surplus distribution achieved via the VCG mechanism. 
{Unlike P\&C and PA\&C, the VCG mechanism is a static, dominant-strategy incentive-compatible mechanism rather than a sequential bargaining protocol, so it provides a benchmark independent of the sequential logic underlying our main results. 
Since VCG payments are not budget balanced in general, we study their balanced version and show that it too corresponds to a specific, well-known solution concept, the Egalitarian Non-Separable Contribution (ENSC), the dual of the CIS value \citep{DriessenFunaki1991}. 
This indicates that the correspondence between efficient mechanisms and TU-game solution concepts is not confined to sequential pricing protocols, but extends to a canonical mechanism-design benchmark as well.}

\subsection{Price, Accept and Choose Only (PA\&CO)}

One of the major differences between PA\&C and P\&C is that players can accept or reject the previous price vector. 
Since the last player also has this right, she need not be indifferent among all alternatives in equilibrium. 
One might wonder what happens if this last stage is altered so that the last player cannot accept or reject, but can only choose an alternative. 
This seemingly minor change alters the core argument in the selection phase. 
When the last player cannot accept or reject the previous price vectors, the optimal strategy for the preceding player is to make the last mover indifferent among all alternatives; this forces the last mover to receive her average utility in equilibrium, since otherwise the preceding player could slightly adjust prices to improve her own payoff. 
By contrast, when the last mover can accept or reject the previous price vector (as in PA\&C), she is not indifferent across all alternatives in equilibrium, but only between accepting (and selecting the efficient alternative) and rejecting (and selecting her own favorite alternative). In the following proposition, we show that this modification changes the associated TU-game, which becomes $\tilde{v}$, but not the solution concept, which remains the Shapley value.

Specifically, the PA\&CO mechanism proceeds as follows:
 
\begin{itemize}
 \item [i)] Player 1 proposes a price vector $p_1 \in P$.
\item [ii)] For each $i = 2, \dots, n-1$, player $i$ either accepts or rejects all previously proposed price vectors $\left(p^c_{j}\right)_{j<i}$, and subsequently proposes to player $i+1$ a price vector $p_i \in P$.
\item [iii)] Player $n$ chooses an alternative $a \in \mathcal{A}$.
\item [iv)] Each player $i>1$ transfers $p^c_{i-1}(a)$ to player $i-1$.
\end{itemize}

In the PA\&CO mechanism, when player $i$ rejects, it is as if the PA\&CO mechanism restarts at player $i$. If player $i$ accepts, the PA\&CO mechanism instead continues from the earliest preceding player whose price vector has not been rejected.

\bigskip

In order to associate the SPE payoffs of the PA\&CO mechanism with a solution concept of a TU game, we use the game $\tilde{v}$ already introduced in Definition~\ref{def:v_tilde}, with $\tilde{v}(\{i\}) = \mathrm{Avg}_i$ for each $i \in N$ and $\tilde{v}(S) = \max_{a \in \mathcal{A}} \sum_{i \in S} u_i(a)$ for all $S \subseteq N$ with $|S| \geq 2$.

Let
 \begin{equation*}
     \tilde{x}_i^{\ast}=
     \begin{cases}
        \tilde{v}\left(\{i,\dots,n\}\right)- \tilde{v}\left(\{i+1,\dots,n\}\right) & \text{ for } i=1,\dots,n-1, \\
        \tilde{v}\left(\{n\}\right) & \text{ for } i=n.
     \end{cases}
 \end{equation*}

\begin{prop} \label{prop:PACO_General}
Consider the PA\&CO mechanism with $n \ge 2$ players. Then:
\begin{itemize}
    \item [i)] In any SPE, the equilibrium payoff of each player $i$ coincides with her marginal contribution $\tilde{x}_{i}^{\ast}$, with respect to the game $\tilde{v}$.
    \item [ii)] The PA\&CO mechanism subgame-perfect implements the set of efficient alternatives.
\end{itemize}
\end{prop}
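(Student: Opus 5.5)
The proof will be by induction on $n$, following the template of Theorem~\ref{prop:PAC_General}, with the base case treated via the argument behind Proposition~\ref{prop:PC_General}.

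\textbf{Base case $n=2$.} The PA\&CO mechanism coincides with P\&C, since player 2 only chooses. By Proposition~\ref{prop:PC_General}, player 2 obtains $\mathrm{Avg}_2=\tilde v(\{2\})=\tilde x_2^\ast$. Player 1 obtains $v(N)-\mathrm{Avg}_2=\tilde v(\{1,2\})-\tilde v(\{2\})=\tilde x_1^\ast$, and efficiency and implementation follow from that proposition as well.

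\textbf{Inductive step.} The structural observation is that if player $i\ge 2$ rejects, the continuation is a PA\&CO mechanism among $\{i,\dots,n\}$ started fresh. This holds because earlier prices are erased and $i$ acts as first mover. Hence by the induction hypothesis (and Proposition~\ref{prop:PC_General} when $i=n-1$), player $i$ can guarantee $\tilde v(\{i,\dots,n\})-\tilde v(\{i+1,\dots,n\})=\tilde x_i^\ast$ for $i\le n-1$. So Claim~1 gives $y_i\ge \tilde x_i^\ast$ for $2\le i\le n-1$.

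Player $n$ has no rejection option, so her lower bound must come from the usual P\&C argument. Whatever the realized prices, the aggregate transfer she pays sums to zero over alternatives. Choosing her best alternative under those prices therefore yields at least the average of $u_n(a)-p^c_{n-1}(a)$, which is $\mathrm{Avg}_n=\tilde x_n^\ast$.

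For player 1, I will mirror Claim~2. Suppose $y_1<\tilde x_1^\ast$ with gap $\varepsilon$, and fix $a^\ast\in\mathcal A^\ast$. Set $u_E:=\sum_{j\ge2}u_j(a^\ast)-\sum_{j\ge2}\tilde x_j^\ast$, and note that $\sum_{j\ge 2}\tilde x_j^\ast=\tilde v(\{2,\dots,n\})$ telescopes. For $n\ge3$ this equals $v(\{2,\dots,n\})$, so $u_E\le0$. Player 1 then deviates to $\tilde p_1(a^\ast)=u_E-\varepsilon/2$ and spreads the negative of this amount uniformly over the other alternatives.

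The key step is to show that after this deviation player 2 accepts and the continuation selects $a^\ast$. Accepting is like playing the $(n-1)$-player game with player 2's utility shifted by $-\tilde p_1$. To handle this, I will apply the induction hypothesis to the modified problem with $u_2'=u_2-\tilde p_1$. Alternatives whose ordering by the accept/reject history of later players continues from player 2's accepted vector behave as if $p_1$ were bundled into 2's utility. This is legitimate since later rejections by $j\ge3$ erase $p_1$ as well, which in the modified problem is equivalent to erasing 2's stake.

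Here $\tilde v'(\{2,\dots,n\})=\sum_{j\ge2}u_j(a^\ast)-u_E+\varepsilon/2$ with $a^\ast$ the unique maximizer, so player 2's payoff from accepting is $\tilde x_2^\ast+\varepsilon/2$ plus the unchanged shares of the others. Thus she accepts, $a^\ast$ is selected, and player 1 gets $\tilde x_1^\ast-\varepsilon/2>y_1$, a contradiction.

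\textbf{Main obstacle.} The delicate point is exactly this bundling: when $j\ge3$ rejects, both $p_1$ and $p_2$ vanish, so player 2's modified utility reverts to $u_2$ off path. I will check that the rejection thresholds of later players depend only on the suffix coalition, so player 2's continuation payoff is still her marginal contribution in the modified game $\tilde v'$. Singleton worths matter only for player $n$, and $\mathrm{Avg}$ is unaffected for $n\ge3$.

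\textbf{Conclusion.} Summing the bounds, $\sum_i y_i\ge \sum_i\tilde x_i^\ast=\tilde v(N)=v(N)\ge\sum_i u_i(a)=\sum_i y_i$. This forces $y_i=\tilde x_i^\ast$ for all $i$ and efficiency of the selected alternative. Support of every $a^\ast\in\mathcal A^\ast$ then follows by tie-breaking in favor of $a^\ast$, exactly as in Claim~4 of Theorem~\ref{prop:PAC_General}.
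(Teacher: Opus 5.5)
Your proposal is essentially the paper's proof: the same induction with the two-player P\&C base case, the rejection bound for players $2,\dots,n-1$, the averaging bound for player $n$, the $u_E$-deviation for player~1, and the summation and tie-breaking finish. The bundling issue you flag is real, and the paper passes over it too by simply invoking the induction hypothesis after player~2 accepts. It closes if you carry along the following auxiliary claim: for any alternative $b$ and $\eta>0$, after player~1 proposes $q(b)=\sum_{j\ge 2}u_j(b)-\tilde v(\{2,\dots,n\})-\eta$ (with $-q(b)$ spread evenly over the other alternatives), every SPE has player~2 accepting, no later rejection, and $b$ chosen. For $n=3$ this is just P\&C with utilities $(u_2-q,u_3)$. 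For $n\ge 4$ it follows because the payoffs of players $j\ge 3$ do not involve $p_1$, so by the claim one level down player~2 secures every payoff below $\tilde x_2^\ast+\eta$; meanwhile any path with a rejection, or ending at $a\neq b$, gives $\{2,\dots,n\}$ at most $\tilde v(\{2,\dots,n\})$, which falls short of $\tilde x_2^\ast+\eta+\sum_{j\ge 3}\tilde x_j^\ast$.
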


\begin{proof}
The proof proceeds by induction on $n$, mirroring the proof of Proposition~\ref{prop:PAC_General}. The base case $n=2$ coincides with the P\&C mechanism, since player $n$ has no accept/reject option and there is no intermediate player; Proposition~\ref{prop:PC_General} then gives $y_1 = v(N) - \mathrm{Avg}_2 = \tilde{x}_1^*$ and $y_2 = \mathrm{Avg}_2 = \tilde{x}_2^*$. Now, assume the proposition holds for all games with fewer than $n$ players, $n \geq 3$.

Fix an arbitrary SPE of the $n$-player PA\&CO mechanism, with equilibrium payoff vector $y = (y_1, \ldots, y_n)$ and equilibrium outcome $a$. Since transfers cancel along the chain,
\begin{equation}
\sum_{i \in N} y_i = \sum_{i \in N} u_i(a). \label{eq:PACO_sum_payoffs}
\end{equation}

\textbf{Claim 1:} $y_i \geq \tilde{x}_i^{\ast}$ for each $i = 2, \ldots, n-1$.

If player $i$ rejects, the continuation game is a PA\&CO mechanism among players $i, \ldots, n$. By the induction hypothesis, player $i$ obtains $\tilde{x}_i^{\ast}$ in that subgame, which depends only on $\tilde{v}$ restricted to $\{i, \ldots, n\}$, and so coincides with $\tilde{x}_i^{\ast}$ in the original game. So, she can guarantee at least $\tilde{x}_i^{\ast}$.

\textbf{Claim 2:} $y_n \geq \tilde{x}_n^{\ast}$.

Player $n$ has no accept/reject option: she always faces the price vector $p_{n-1} \in P$ proposed by player $n-1$, whatever it is. Since $p_{n-1}$ is budget balanced,
\[
y_n = \max_{a \in \mathcal{A}} \bigl(u_n(a) - p_{n-1}(a)\bigr)
\geq \frac{1}{m} \sum_{a \in \mathcal{A}} \bigl(u_n(a) - p_{n-1}(a)\bigr)
= \mathrm{Avg}_n - \frac{1}{m} \sum_{a \in \mathcal{A}} p_{n-1}(a)
= \mathrm{Avg}_n = \tilde{x}_n^{\ast}.
\]

\textbf{Claim 3:} $y_1 \geq \tilde{x}_1^{\ast}$.

Suppose for contradiction that $y_1 < \tilde{x}_1^{\ast}$, and let $\varepsilon := \tilde{x}_1^{\ast} - y_1 > 0$. Fix any $a^{\ast} \in \mathcal{A}^{\ast}$ and define
\[
u_E := \sum_{j \geq 2} u_j(a^{\ast}) - \sum_{j \geq 2} \tilde{x}_j^{\ast}.
\]
Since $\sum_{j \geq 2} \tilde{x}_j^{\ast} = \tilde{v}(\{2, \ldots, n\}) = \max_{a \in \mathcal{A}} \sum_{j \geq 2} u_j(a) \geq \sum_{j \geq 2} u_j(a^{\ast})$, we have $u_E \leq 0$. Consider the deviation $\tilde{p}_1 \in P$ defined by
\[
\tilde{p}_1(a^{\ast}) = u_E - \frac{\varepsilon}{2}, \
\tilde{p}_1(a) = -\frac{1}{m-1}\Bigl(u_E - \frac{\varepsilon}{2}\Bigr) \ \text{ for } a \neq a^{\ast}.
\]
Since $u_E \leq 0$, we have $\tilde{p}_1(a) > 0$ for all $a \neq a^{\ast}$, making $a^{\ast}$ the unique maximizer of $\sum_{j \geq 2} u_j(a) - \tilde{p}_1(a)$. By the induction hypothesis applied to the continuation game after player 2 accepts $\tilde{p}_1$, player 2 accepts (since doing so yields strictly more than $\tilde{x}_2^{\ast}$) and $a^{\ast}$ is selected. A direct calculation gives that player 1's payoff from $\tilde{p}_1$ is $\tilde{x}_1^{\ast} - \frac{\varepsilon}{2} > y_1$, contradicting equilibrium. Hence $y_1 \geq \tilde{x}_1^{\ast}$.

\textbf{Claim 4:} $y_i = \tilde{x}_i^{\ast}$ for all $i \in N$.

By~\eqref{eq:PACO_sum_payoffs} and efficiency of $\mathcal{A}^{\ast}$,
\[
\sum_{i \in N} y_i = \sum_{i \in N} u_i(a) \leq \sum_{i \in N} u_i(a^{\ast}) = \sum_{i \in N} \tilde{x}_i^{\ast}.
\]
Combined with $y_i \geq \tilde{x}_i^{\ast}$ for all $i$ (Claims 1--3), this forces $y_i = \tilde{x}_i^{\ast}$ for every $i \in N$, proving part i).

To prove part ii), we show that every SPE selects some $a^{\ast} \in \mathcal{A}^{\ast}$, and that every $a^{\ast} \in \mathcal{A}^{\ast}$ is selected by some SPE.

The first part follows immediately: if the equilibrium outcome $a \notin \mathcal{A}^{\ast}$, then $\sum_{i \in N} u_i(a) < \sum_{i \in N} \tilde{x}_i^{\ast}$, contradicting Claim 4 via~\eqref{eq:PACO_sum_payoffs}.

For the second part, fix any $a^{\ast} \in \mathcal{A}^{\ast}$ and consider strategies under which ties among efficient alternatives are broken in favor of $a^{\ast}$. Since all alternatives in $\mathcal{A}^{\ast}$ generate the same total surplus, this tie-breaking leaves every player's marginal contribution unchanged. By Claim 4, no player has a profitable deviation, so the resulting strategy profile is an SPE that selects $a^{\ast}$.
\end{proof}

\begin{prop} \label{prop:PACO_Shapley}
The expected payoff of any SPE of the randomized PA\&CO mechanism coincides with the Shapley value of the TU game $\tilde{v}$.
\end{prop}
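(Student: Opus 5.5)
The argument follows the proof of Theorem~\ref{prop:random_PAC}, with Proposition~\ref{prop:PACO_General} in place of Theorem~\ref{prop:PAC_General}. Fix a realized order $\pi$ and a player $i$, and let $S \subseteq N\setminus\{i\}$ be her set of followers under $\pi$. Relabel players according to $\pi$ and apply Proposition~\ref{prop:PACO_General}~(i). Every SPE of the PA\&CO mechanism played in order $\pi$ gives player $i$ the payoff $\tilde{v}(S\cup\{i\}) - \tilde{v}(S)$ whenever $S \neq \emptyset$. If $i$ is last ($S=\emptyset$), she gets $\tilde{v}(\{i\})$. Adopting the convention $\tilde{v}(\emptyset)=0$, which is the standard convention for TU games, this last case is also $\tilde{v}(S\cup\{i\})-\tilde{v}(S)$. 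So in every order, player $i$'s payoff is her marginal contribution in $\tilde{v}$ to her follower set.

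One point needs care. When $|S|=1$, the marginal contribution is $\tilde{v}(S\cup\{i\}) - \tilde{v}(S) = v(\{i,j\}) - \mathrm{Avg}_j$. This uses the modified singleton worth, which is exactly what $\tilde{x}_i^{\ast}$ prescribes for the second-to-last mover. For $|S|\ge 2$, the games $\tilde{v}$ and $v$ agree on both coalitions, so no issue arises. I would also check that Proposition~\ref{prop:PACO_General} applies to every subgame reached after Nature's move. The randomized mechanism's SPE restricts to an SPE of each order-$\pi$ PA\&CO game, so this holds.

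Next comes the counting step, exactly as before. For each $S\subseteq N\setminus\{i\}$, exactly $|S|!\,(n-|S|-1)!$ orders make $S$ the follower set of $i$, and each order has probability $1/n!$. The expected payoff of player $i$ is therefore
\[
\sum_{S\subseteq N\setminus\{i\}} \frac{|S|!\,(n-|S|-1)!}{n!}\bigl(\tilde{v}(S\cup\{i\})-\tilde{v}(S)\bigr) = \phi_i(\tilde{v}).
\]

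The only potential obstacle is the boundary bookkeeping, namely the last mover and the second-to-last mover. There one must verify that $\tilde{x}^{\ast}$ really matches marginal contributions in $\tilde{v}$, and not in $v$, including the $\emptyset$ convention. Both follow directly from the definition of $\tilde{x}_i^{\ast}$, so the proof is short.
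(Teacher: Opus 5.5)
Your proposal is correct and matches the paper's proof, which simply reruns the counting argument of Theorem~\ref{prop:random_PAC} with Proposition~\ref{prop:PACO_General} in place of Theorem~\ref{prop:PAC_General}. Your explicit checks are details the paper leaves implicit: the convention $\tilde{v}(\emptyset)=0$, the second-to-last mover's payoff $v(\{i,j\})-\mathrm{Avg}_j$, and the restriction of an SPE to each order-$\pi$ subgame.
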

\begin{proof}
    The proof mirrors exactly the one in Theorem \ref{prop:random_PAC}.
\end{proof}

\subsection{Balanced VCG Payments}
While the PA\&C mechanism, together with its extensions, and the P\&C mechanism are designed to subgame-perfect implement efficient alternatives, it is crucial to understand how the surplus is distributed among players compared to other standard efficient mechanisms. A widely recognized tool for ensuring {efficiency} is the VCG mechanism. When considering the outcomes generated by the efficient alternative together with the balanced VCG payments, we show that the resulting payoffs align with the ENSC solution of the TU games $v$ and $\tilde{v}$. This contrasts with the (expected) equilibrium payoffs of the (randomized) Bid-PA\&C mechanism, which we have shown to coincide with the Shapley value of the TU game $v$. Similarly, the (expected) payoffs of the (randomized) Bid-P\&C mechanism correspond to the CIS value of the TU game $\tilde{v}$. 

More formally, let $t_{i}^{V}$ be the VCG payment of $i$ defined by:
\[
t_{i}^{V}=\sum_{j\neq i}u_{j}\left( a^{\ast }\left( N\backslash \left\{
i\right\} \right) \right) -\sum_{j\neq i}u_{j}\left( a^{\ast }\left(
N\right) \right) .
\]%
Using the language of $v$, we can write it as:%
\begin{equation}
t_{i}^{V}=v\left( N\backslash \left\{ i\right\} \right) -v\left( N\right)
+u_{i}\left( a^{\ast }\left( N\right) \right) .  \label{eq:t_i^V}
\end{equation}%
As is well known, the VCG payment is not balanced in general.

Let $t_{i}^{b}$\ denote the balanced VCG payment of $i$. Then, using a
constant $c$, $t_{i}^{b}$ is defined by the following conditions:%
\[
\left\{ 
\begin{array}{l}
t_{i}^{b}=t_{i}^{V}+c\text{ for all }i, \\ 
\sum_{j\in N}t_{j}^{b}=0.%
\end{array}%
\right. 
\]%
Therefore, $c$ is determined by:%
\[
c=-\frac{1}{n}\sum_{j\in N}t_{j}^{V}.
\]%
By (\ref{eq:t_i^V}), 
\begin{eqnarray}
c &=&-\frac{1}{n}\sum_{j\in N}\left\{ v\left( N\backslash \left\{ j\right\}
\right) -v\left( N\right) +u_{j}\left( a^{\ast }\left( N\right) \right)
\right\}   \nonumber \\
&=&-\frac{1}{n}\sum_{j\in N}\left\{ v\left( N\backslash \left\{ j\right\}
\right) +u_{j}\left( a^{\ast }\left( N\right) \right) \right\} +v\left(
N\right)   \nonumber \\
&=&-\frac{1}{n}\sum_{j\in N}v\left( N\backslash \left\{ j\right\} \right) +%
\frac{n-1}{n}v\left( N\right) .  \label{eq:c_by_v}
\end{eqnarray}

\begin{prop} \label{prop:balanced_VCG_ENSC}
Allocation by the efficient alternative and the balanced VCG payments coincides with the Egalitarian Non-Separable Contribution values of the games $v$ and $\tilde{v}$.
\end{prop}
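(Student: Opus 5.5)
The plan is a direct computation: write down each player's payoff under the efficient alternative with balanced VCG payments, and compare it with the ENSC formula. Recall that the ENSC value is
\[
\mathrm{ENSC}_i(w) = \mathrm{SC}_i(w) + \frac{1}{n}\Bigl(w(N) - \sum_{j\in N}\mathrm{SC}_j(w)\Bigr),
\]
where $\mathrm{SC}_i(w) := w(N) - w(N\setminus\{i\})$ is the separable contribution of player $i$. Equivalently, $\mathrm{ENSC}_i(w) = -w(N\setminus\{i\}) + \frac{1}{n}\bigl(\sum_{j\in N} w(N\setminus\{j\}) - (n-2)\,w(N)\bigr)$. I will use this second form.

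\textbf{Step 1: payoffs under balanced VCG.} Here $t_i^b$ is the amount paid by player $i$, so her payoff is $u_i(a^{\ast}(N)) - t_i^b$. Substituting \eqref{eq:t_i^V} removes the term $u_i(a^{\ast}(N))$, leaving
\[
v(N) - v(N\setminus\{i\}) - c .
\]
Next I plug in \eqref{eq:c_by_v}. This gives
\[
v(N) - v(N\setminus\{i\}) + \frac{1}{n}\sum_{j} v(N\setminus\{j\}) - \frac{n-1}{n}v(N)
= \mathrm{SC}_i(v) + \frac{1}{n}\Bigl(v(N) - \sum_j \mathrm{SC}_j(v)\Bigr),
\]
where the last equality follows from $\sum_j \mathrm{SC}_j(v) = n\,v(N) - \sum_j v(N\setminus\{j\})$. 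The right-hand side is exactly $\mathrm{ENSC}_i(v)$.

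\textbf{Step 2: passing from $v$ to $\tilde v$.} The ENSC value depends only on $w(N)$ and on the worths $w(N\setminus\{j\})$. The games $v$ and $\tilde v$ agree on every coalition of size at least $2$, so they agree on all of these worths as soon as $n\ge 3$. In that case $\mathrm{ENSC}(v)=\mathrm{ENSC}(\tilde v)$ and the claim follows immediately.

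\textbf{Main obstacle: the case $n=2$.} Here $N\setminus\{j\}$ is a singleton, and singletons are exactly where $v$ and $\tilde v$ differ ($\max_a u_i(a)$ versus $\mathrm{Avg}_i$). The VCG computation in Step 1 uses $v(N\setminus\{i\}) = u_{-i}(a^{\ast}(N\setminus\{i\}))$, that is, the maximum. So for $n=2$ it matches $\mathrm{ENSC}(v)$, but in general not $\mathrm{ENSC}(\tilde v)$. I would handle this by stating the $\tilde v$ part for $n\ge 3$, or under the paper's implicit convention, and flagging $n=2$ explicitly as a remark. Everything else is routine algebra.
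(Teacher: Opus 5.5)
Your argument follows the paper's own computation: you reduce the payoff to $\mathrm{SC}_i(v)-c$ via \eqref{eq:t_i^V} and \eqref{eq:c_by_v}, identify this with $\mathrm{ENSC}_i(v)$, and transfer the result to $\tilde v$ because the two games differ only on singletons (one slip: your announced ``second form'' should read $\mathrm{ENSC}_i(w)=-w(N\setminus\{i\})+\frac{1}{n}\bigl(\sum_{j} w(N\setminus\{j\})+w(N)\bigr)$ rather than containing $-(n-2)w(N)$, but Step~1 in fact uses the correct first form). Your $n=2$ caveat is correct and exposes a gap in the paper's closing sentence, which silently requires $n\ge 3$: for $n=2$, $\mathrm{ENSC}_i(w)=\frac12\bigl(w(N)+w(\{i\})-w(\{j\})\bigr)$, so with $m=2$, $(u_1(a_1),u_1(a_2))=(0,2)$ and $(u_2(a_1),u_2(a_2))=(1,1)$, balanced VCG yields $(2,1)=\mathrm{ENSC}(v)$ while $\mathrm{ENSC}(\tilde v)=(\frac32,\frac32)$.
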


\begin{proof}
ENSC is defined by equally sharing the remainder of the grand coalition value
after distributing the separable contribution of each player, that is,%
\begin{equation}
ENSC_{i}\left( v\right) =SC_{i}\left( v\right) +\frac{1}{n}\left( v\left(
N\right) -\sum_{j\in N}SC_{j}\left( v\right) \right)   \label{eq:ENSCdef}
\end{equation}%
where $SC_{i}\left( v\right) $ stands for separable contribution of $i$,
defined as:%
\[
SC_{i}\left( v\right) =v\left( N\right) -v\left( N\backslash \left\{
i\right\} \right) .
\]%
By summing up over all individuals, 
\[
\sum_{j\in N}SC_{j}\left( v\right) =nv\left( N\right) -\sum_{j\in N}v\left(
N\backslash \left\{ j\right\} \right) .
\]%
Plugging this into (\ref{eq:ENSCdef}), we obtain:%
\begin{eqnarray*}
ENSC_{i}\left( v\right)  &=&SC_{i}\left( v\right) +\frac{1}{n}\left( \left(
1-n\right) v\left( N\right) +\sum_{j\in N}v\left( N\backslash \left\{
j\right\} \right) \right)  \\
&=&SC_{i}\left( v\right) -\frac{n-1}{n}v\left( N\right) +\frac{1}{n}%
\sum_{j\in N}v\left( N\backslash \left\{ j\right\} \right) .
\end{eqnarray*}%
By (\ref{eq:c_by_v}), we have:%
\[
ENSC_{i}\left( v\right) =SC_{i}\left( v\right) -c.
\]

Now, let us consider the payoff of individual $i$ when the allocation by the
efficient alternative and balanced VCG payments is implemented:
\begin{eqnarray*}
u_{i}\left( a^{\ast }\left( N\right) \right) -t_{i}^{b} &=&u_{i}\left(
a^{\ast }\left( N\right) \right) -\left( t_{i}^{V}+c\right)  \\
&=&-v\left( N\backslash \left\{ i\right\} \right) +v\left( N\right) -c \\
&=&SC_{i}\left( v\right) -c.
\end{eqnarray*}%
The first and the third equalities are by definition. The second one is by (\ref{eq:t_i^V}). 
The last line coincides with the ENSC. Since the values of $v$ and $\tilde{v}$ differ only for the standalone coalitions, their ENSC values coincide as well.
\end{proof}

\section{Conclusion}\label{sec:Conclusion}

This paper shows that simple sequential pricing mechanisms can achieve efficiency and generate surplus divisions that correspond to canonical TU-game solution concepts. The Price-and-Choose mechanism yields the Center of the Imputation Set value; our proposed extension, Price-Accept-and-Choose, yields the Shapley value. 
The difference traces to a simple change in the design:  whether players can reject proposed prices, 
which changes each player's standalone worth in the corresponding TU game, from her average utility across alternatives to her maximum one, and thereby changes the equilibrium division of the entire surplus.
Bidding procedures that endogenize the order of play implement these TU-game solutions on every equilibrium path, not merely in expectation.

These results suggest a broader agenda. From a design perspective, our findings imply that an arbitrator choosing among efficient mechanisms is implicitly choosing a TU-game solution concept. This offers a normative criterion for mechanism selection: if the Shapley value's marginal-contribution logic is deemed fair, PA\&C is the appropriate institution; if equal sharing of the surplus is preferred, P\&C delivers the CIS value. Making this mapping explicit gives practitioners a principled basis for choosing among mechanisms that are otherwise observationally equivalent in their efficiency properties.

Several directions remain open. First, the difference between P\&C and PA\&C hinges on a single design choice: whether players can reject proposed prices. In the current PA\&C mechanism, rejection means that all previously proposed prices are deleted. A natural extension would allow players to opt out of the transfer system individually while the remaining players continue to participate under the prices they have accepted. This selective exit option would be particularly relevant in partnership dissolution problems, where some partners may wish to withdraw from a collectively negotiated restructuring while others proceed (see \cite{Cramton1987} for a classic paper and \cite{VanEssen2016} for a recent paper on partnership dissolution). 
The resulting mechanism would sit between P\&C (no rejection) and PA\&C (full rejection), and the corresponding TU-game solution would presumably lie between the CIS and the Shapley values. Compromises of this kind have been studied in cooperative game theory: e.g., the consensus value of \citet{JuBormRuys2007}. 
Identifying a sequential pricing mechanism whose equilibrium payoffs implement such a compromise would turn the pair of benchmarks described in this paper into a full spectrum, with the extent of the players' opt-out rights as the parameter that moves along it.

Second, our mechanisms require quasi-linear preferences and unrestricted transfers. Budget constraints or wealth effects would break the clean marginal-contribution characterization and could reintroduce inefficiency. 

Third, while \cite{Funaki_Exp} provide initial experimental evidence, further laboratory and field tests could assess whether the theoretical surplus divisions emerge in practice and whether subjects find PA\&C's rejection option as strategically transparent as the theory assumes.

\subsection*{Statements and Declarations}

\subsubsection*{Funding}
This work has benefited from a State grant managed by the Agence Nationale de la Recherche under the Investissements d'Avenir programme with the reference ANR-18-EURE-0005/EUR DATA EFM and ANR-11-IDEX-0003/Labex Ecodec/ANR-11-LABX-0047.

\subsubsection*{Competing Interests}
The authors have no competing financial or non-financial interests to declare that are relevant to the content of this article. Giacomo Rostagno is employed by RBB Economics; the views expressed are those of the author and do not reflect those of RBB Economics or its clients.

\subsubsection*{Author Contributions}
All authors contributed to the study conception and design, theoretical analysis, and writing of the manuscript. All authors read and approved the final manuscript.

\subsubsection*{Data Availability}
Data sharing is not applicable to this article as no datasets were generated or analyzed during the current study.

\bibliography{bibbib}
\end{document}